\newtheorem{theorem}{Theorem}[section]
\newtheorem{lemma}[theorem]{Lemma}
\newtheorem{proposition}[theorem]{Proposition}
\theoremstyle{definition}
\newcommand{\Rz}{\mathbb{R}}
\newcommand{\Nz}{\mathbb{N}}
\newcommand{\Zz}{\mathbb{Z}}
\newcommand{\eps}{\varepsilon}
\newcommand{\PPP}{\color{black}} 
\newcommand{\BBB}{\color{black}}
\newcommand{\EEE}{\color{black}} 
\newcommand{\RRR}{\color{black}} 
\newcommand{\MMM}{\color{black}}
\newcommand{\ZZZ}{\color{black}} 
\newcommand{\olive}{\color{black}}
\title[]{Atomistic potentials and the Cauchy-Born rule for carbon nanotubes: a review}
\author{Manuel Friedrich} \author{Edoardo Mainini} \author{Paolo
  Piovano} 
\subjclass[2010]{Primary: 82D25}
 \keywords{Carbon nanotubes, Tersoff energy, variational perspective,
   stability, Cauchy-Born rule.}
\address[Manuel Friedrich]{Applied Mathematics M\"unster, University of M\"unster\\
Einsteinstrasse 62, 48149 M\"unster, Germany}
\email{manuel.friedrich@uni-muenster.de}
\urladdr{https://www.uni-muenster.de/AMM/Friedrich/index.shtml}
\address[Edoardo Mainini]{Dipartimento di
   Ingegneria meccanica, energetica, gestionale e dei trasporti
   (DIME), Universit\`a degli Studi di Genova,  Via all'Opera Pia 15, I-16145 Genova, Italy 		}
\email{mainini@dime.unige.it}
\urladdr{http://www.dime.unige.it/it/users/edoardo-mainini}
\address[Paolo Piovano]{Faculty of Mathematics,  University of Vienna, Oskar-Morgenstern-Platz 1, A-1090 Vienna, Austria}
\email{paolo.piovano@univie.ac.at}
\urladdr{http://www.mat.univie.ac.at/~piovano/Paolo\texttt{Samp\_Dist\_Corr}Piovano.html}
\begin{document}

\maketitle

% Enter the first author's name and address:

%The abstract of your paper
\begin{abstract}
Carbon nanotubes are modeled as point particle configurations in the framework of Molecular Mechanics, where interactions are described by means of short range attractive-repulsive potentials. The identification of \olive local energy minimizers \EEE  yields a variational description \olive for the stability of rolled-up hexagonal-lattice  structures. \EEE  Optimality of periodic configurations is preserved under moderate
tension, \PPP  \EEE hence justifying the elastic behavior of carbon nanotubes in the axial traction regime. %\ZZZ In this paper we review some recent results results on this topic. \EEE
\end{abstract}

%\tableofcontents 

%\newpage

%The title of your section 1
\section{Introduction}

%Nanostructured carbon has emerged over the last two decades as one of
%the most promising materials available to mankind. The discovery of
%fullerenes \cite{Kroto85,Kroto87}, followed by that of carbon
%nanotubes \cite{Iijima91} and graphene \cite{Geim,Novoselov}, \PPP sparked an interest in \EEE   low-dimensional materials. 
%The fascinating electronic and mechanical properties of
%single-atom-thick surfaces and structures  are  believed to offer
%unprecedented opportunities for innovative applications, ranging from
%next-generation electronics to pharmacology, to batteries and
%solar cells \cite{Gupta,Mannix,Mas-Balleste}.
%New findings are emerging at an always increasing
%pace, cutting across \PPP materials science, physics, and
%chemistry, \EEE and extending from fundamental science to novel
%applications \cite{Dresselhaus11,Morris13}. 

%We introduce a general variational description of carbon nanotubes with the aim of studiyng their mechanical properties.
Carbon nanotubes are long cylindrical structures  
made of atom-thick layers of
carbon atoms forming {\it $sp^2$ covalent} bonds  \olive and locally \EEE arranging themselves \olive in hexagonal-lattice \EEE patterns
\cite{Clayden12,Dresselhaus-et-al95}.  Among the reasons of the central role \olive that carbon nanotubes have conquered \EEE in several modern technology applications, \olive we mention their \EEE exceptional mechanical properties and tensile strength  \cite{Arroyo05,KDEYT, TEG}, \olive and the fact that \EEE they can grow in length up to $10^8$ times the diameter \cite{WLJ, ZZW}.
%\PPP In addition, they are electrically and thermally conductive, chemically sensitive, transparent, and light weight \cite{Tuukanen}, hence the play a central role in several modern technolgy applications.  
\EEE
%Nanotubes can be visualized
%as the result of rolling up a patch of a regular hexagonal
%lattice. Depending on the different possible realizations of this rolling-up, different
%topologies may arise, giving rise to {\it zigzag, armchair},
%and {\it chiral} nanotubes. These  topologies  are believed to have a specific impact on the
% mechanical and electronic properties of the nanotube, which can range from highly
%  conducting to semiconducting \cite{Cao07,Charlier98}. 

%In contrast to the ever-growing material knowledge, the rigorous mathematical
%description of two-dimensional carbon systems is considerably less
%developed. 
The
mechanical response  of nanotubes under {\it stretching}  is therefore a topical research subject from the
theoretical \cite{Bajaj13, Favata14, Ru01, Ruoff, YA, Zhang08}, the computational
\cite{Agrawal, Cao07, Han14, Jindal}, and the experimental point of view
\cite{Demczyk, KDEYT,LCS,Warner11, YFAR, YU}. \PPP %A  description of nanotubes under stretching \ZZZ requires \EEE to correctly resolve \EEE   the atomic
%scale and,  simultaneously, to rigorously deal with the whole
%structure.  
\MMM
Ab initio  models from quantum mechanics
provide an accurate description of the atomic scale    and may  describe characterizing features of carbon nanotube geometry
and mechanics  \cite{Li07,Rochefort99,Yakobson96}. The drawback of these methods is  the rapid
increase in 
computational complexity. Hence, for large systems, 
% These bring the advantage of possibly dealing with 
%long structures, at the price however of a less accurate
%description of the detailed microscopic behavior. 
it is suitable to consider a description in the 
    framework of  molecular
mechanics   \cite{Allinger,Lewars,Rappe}, thus regarding atoms as classical interacting point particles.%, and here we follow such 

Following the latter  approach, we  identify carbon \EEE
nanotubes 
with configurations $\{x_1,
\dots, x_n\}\in \Rz^{3n}$ corresponding to atomic
positions.  The atoms  \ZZZ interact \EEE via  {\it configurational
energies} $E=E(x_1,
\dots, x_n)$ \olive that \ZZZ  depend \EEE on the mutual positions \ZZZ of  particles and \olive \ZZZ are \EEE consistent with the phenomenology of \EEE   $sp^2$ covalent bonding. \EEE
The energy features  classical potentials taking 
into account both attractive-repulsive {\it two-body} effects,
minimized at  a certain  atomic distance, and {\it three-body} terms
favoring specific angles between bonds
\cite{Brenner90,Stillinger-Weber85,Tersoff, T2,T3}.  The $sp^2$-type covalent bonding
implies that each atom has exactly three first neighbors
 and  that bond angles of
$2\pi/3$ are energetically preferred
\cite{Clayden12}. \olive Similar potentials have been previously employed to describe also other carbon allotropes, e.g., graphene in \cite{DPS,emergence, Mainini-Stefanelli12} and fullerene $C_{60}$ in \cite{FPS}. \EEE

%The \PPP reader \EEE is referred to 
%\cite{Davoli15,E-Li09,  FPS, Mainini-Stefanelli12,  stable} 
%for a collection of results on local and global minimizers
%in this setting and to \cite{Smereka15,cronut} for additional
%results on carbon structures.  

In this paper we review the analytical results from \olive \cite{FMPS,MMPS,MMPS-new}: We \EEE
 %Summarizing the analytical results  \olive of \EEE \cite{FMPS, MMPS, MMPS-new}, we
  \olive study \EEE the local minimality of 
periodic configurations, both in the unstreched case and under the effect of small axial
stretching.   More specifically, we prove that, by applying a small
stretching to a nanotube, the energy $E$ is locally strictly minimized by a  specific  periodic
configuration where all atoms see the same local configuration
\olive  (see \EEE Theorem \ref{th: main3} below). Local
minimality is here \olive assessed  \EEE with respect to {\it all}  small  perturbations in
$\Rz^{3n}$, namely not restricting {\it a priori} to \olive only \EEE periodic
perturbations. On the contrary, periodicity is \olive shown  \EEE to emerge as  a byproduct of our variational analysis.
%effect of the global variational nature of the problem.
This results provides new motivation for the continuum mechanics approaches to carbon \ZZZ nanotubes and opens \EEE the possibility of developing a discrete-to-continuum theory. \olive In this regard, we refer the reader to the various 
  continuum models provided in \EEE the literature such as
\cite{Arroyo05,Bajaj13,Favata12,Favata15,Poncharal99, Ru01,Wang05}. 

Our result can be seen as a  validation of the so-called {\it
  Cauchy-Born rule} for carbon nanotubes: \olive By \EEE imposing a small tension, the
periodicity cell deforms correspondingly so that the deformation at the microscale is uniformly distributed along the sample.
In other words,  atoms follow the macroscopic deformation and microscopic periodicity is preserved under \ZZZ imposing \EEE a macroscopic stretching.
%This
%fact rests at the basis of a possible elastic theory for carbon
%nanotubes. 
We stress that such periodicity is \PPP invariably \EEE {\it assumed} in a number of different
contributions, see \cite{Bajaj13,Favata14,Han14,Zhang08} among others, and then exploited in order to compute
tensile strength as well as stretched geometries. Here again our
results provide a theoretical justification of such approaches.

{
Let us mention that, \ZZZ  in principle, \EEE the Cauchy-Born rule would require to check the elastic behavior of the structure with respect to any imposed small displacement.
However, we limit ourselves to consider the most natural tensile stress experiment for the nanotube, i.e., the uniaxial traction. In this setting, we prove that in the small deformation regime, there exists a local minimizer of the energy in which \olive the microscopic periodical pattern \EEE is preserved.  \olive This seems to be the strongest result one can aim for in our context, as {\it global minimality} is not expected in general  for nanotubes and could be instead achieved by completely different \ZZZ topologies. \olive For example, under not too restrictive  hypotheses on \ZZZ the three-body terms,  structures corresponding to the $sp^3$ hybridization \MMM such as diamonds (which  display four nearest neighbors for each bulk atom, arranged at the vertices of a tetrahedron) are energetically favored    (see \cite{Mainini-Stefanelli12} for more details in this direction). \EEE

Finally, we notice that various technical difficulties in the proof originate exactly from the  \EEE fact that our configurational energy is tailored \olive to describe \EEE covalent $sp^2$ bonds.  Indeed, in principle, such modeling assumptions are meant to describe the \olive local minimality of planar  configurations, such as graphene \cite{Mainini-Stefanelli12} and graphene nanoflakes \cite{DPS}. \EEE   When introducing rolled-up structures, however, the \olive  non-planarity \EEE effect destroys the local optimality around each point. Only  a very careful study of the global geometric constraint given by the rolling-up allows to still identify \olive specific nanotube configurations as \EEE local minimizers. \EEE

} 

The paper is organized as \olive follows:  \EEE In Section \ref{Fsection} we introduce a precise geometric description of carbon nanotubes along with the definition of the configurational energy. In Section 3 we state the main result. In Section 4 we collect the main auxiliary tools  and we provide a summary of the proof of the main result.

%%%%%%%%%%%%%%%%%%%%%%%%%%%%%%%%%%%%%%%%%%%%%%%%%%%%%%%%%%%
\section{Geometric and variational description of carbon nanotubes}\label{Fsection}

\subsection{Geometry of zigzag nanotubes}
%\PPP The aim of this section is to introduce some notation and the nanotube configurational energy. \EEE
%Let us start by introducing the mathematical setting as well as
%some preliminary observations. 
%As mentioned above, 
%We describe carbon nanotubes as collections of point particles  in $\Rz^3$. %
  %\PPP i.e., \EEE configurations that are invariant with respect to a translation of a certain period in the direction of the nanotube axis. 
 %It will be convenient to view a nanotube as a periodic repetition  
%of some set of $n$ particles. %in which usually the number of particles in the axial direction is much larger than the one along the cylinder section (which is reminiscent of the fact that nanotubes are very long
%structures, measuring up to   $10^7$  times their diameter. )
For \ZZZ a \EEE precise geometric description, we  \ZZZ start \EEE from some distinguished \olive and \EEE highly symmetric configurations \olive related to {\it zigzag} nanotubes. \EEE
%Let us introduce a reference family of {\it zigzag} nanotubes featuring microscale peridicity. Indeed, the $n$-points sample of such configurations is periodic and the parameter $\mu\ll L$ plays the role of the minimal period.
The term  zigzag refers to \olive a particular  nanotube geometry depicted in  Figure \ref{Figure1} that ideally \EEE results from  \olive rolling up graphene sheets along \EEE \olive specific axis directions suitably chosen \EEE with respect to the lattice vectors of the planar honeycomb graph \olive of the graphene sheet, \EEE see Figure \ref{fig:nanotube}. %See also Figure \ref{Figure1} for a sketch of the corresponding rolled-up configuration.
\olive However, we notice that the definitions \ZZZ introduced here \olive can be easily adapted to other classical choices, \EEE \ZZZ namely \EEE the \olive {\it armchair} \EEE geometry, where the  axis \ZZZ of the cylinder \EEE is instead orthogonal to the bisector of the planar lattice vectors. We refer to
 \cite{MMPS-new} for a precise geometric description in the armchair case.

 \begin{figure}[h]
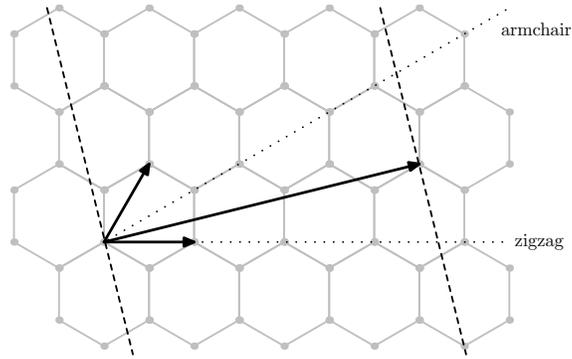

    \pgfdeclareimage[width=0.5\textwidth]{nanotube}{nanotube.pdf}
    \pgfuseimage{nanotube}
    \caption{\PPP Rolling-up \olive of \EEE a graphene sheet:
    a zigzag nanotube is the result of rolling up in such a way that the axis of the cylinder is orthogonal to one of the lattice vectors.
    % to a zigzag nanotube: the vector illustrates the
     % identification of the two dashed vertical lines. The term {\it
    %    zigzag} refers to the orientation of this vector with respect to
     % bonds. Different vectors
     % correspond indeed to different nanotube topologies. The dotted
     %\PPP line indicates \PPP the identification direction for {\it armchair} nanotubes.
     }
    \label{fig:nanotube}
  \end{figure}
% Note that our
%preference for the zigzag topology is solely motivated by the sake of definiteness. 
%The \EEE other classical choice, \PPP namely the \EEE so-called {\it armchair}
% topology, 

\begin{figure}[h]
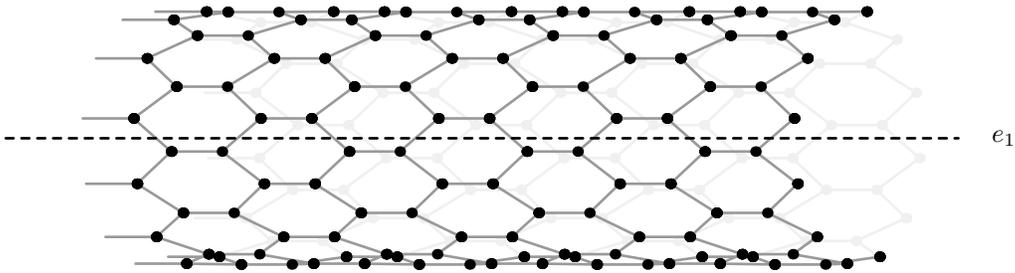

    \pgfdeclareimage[width=0.9\textwidth]{tube}{tube.pdf}
    \pgfuseimage{tube}
    \caption{Zigzag nanotube.}
    \label{Figure1}
  \end{figure}

%Let us proceed to the definition of $\olive\mathscr{F}_1\EEE $.
We let $\ell \in \Nz$, $\ell >3$, and  define the \olive family  $\olive \mathscr{F}\EEE$ of \olive zigzag nanotubes  \EEE 
%\RRRR  Actually, I would use the symbol $\mathscr{F}$ for the family since $\mathscr{F}(\lambda_1,\lambda_2,\mu)$  suggests that $\lambda_1$, $\lambda_2$, $\mu$ have been fixed already. In the latter case, however, the configuration is determined uniquely. On the other hand, it is fine for me to keep the notation $\olive\mathscr{F}_1\EEE $. If you agree, could you please adjust these points? Thank you! \EEE 
as the collection of all configurations that, up to isometries, coincide with the set of points
 \begin{align}
&\left(k (\lambda_1+\sigma) +  j (2\sigma+2\lambda_1)  +
  l(2\sigma + \lambda_1), \;\rho \cos \left(\frac{\pi(2i+k)}{\ell}\right),\; \rho\sin
    \left(\frac{\pi(2i+k)}{\ell}\right) \right),
 \ %\nonumber
 %&\hspace{80mm}
  \label{zigzagfamilydefinition}
\end{align}
$  i=1,\dots,\ell, \ j \in \Zz, \ k,l \in \lbrace 0,1 \rbrace$,
for some choice of 
$$\lambda_1 \in (0,\mu/2),\ \ \  \lambda_2 \in (0,\mu/2),\ \ \ \sigma \in (0,\mu/2), \ \ \   \text{and}  \ \ \ \rho\in \left(0,\,\frac{\mu}{4\sin(\pi/(2\ell))}\right)$$
 satisfying the constraints
\begin{align}\label{eq: basic constraints}
2\sigma + 2\lambda_1 = \mu, \ \ \ \ \ \ \ \ \ \sigma^2 +4\rho^2\sin^2\left(\frac{\pi}{2\ell}\right) =\lambda_2^2.
\end{align}

\ZZZ The parameter $\rho$ indicates the diameter of the tube and $\lambda_1$, $\lambda_2$ are the two possibly different lengths of the
covalent bonds in each hexagon of the tube, where the bonds of length $\lambda_1$ are oriented in the $\mathbf e_1$ direction  (see Figure \ref{extra}). The parameter $\mu$ represents instead the minimal period. \EEE \olive We stress that, in the above definition, the only parameter that is fixed a-priori is the number $\ell$ of atoms per section, and that  by \eqref{eq: basic constraints} we can see $\mathscr{F}$ as a three-parameter family depending on $\lambda_1,\lambda_2$ and $\mu$.  In the following we denote a generic configuration in $\mathscr{F}$ by $\mathcal{F}$ and, when we want to refer to the particular parameters $\lambda_1$, $\lambda_2$, and $\mu$ that uniquely determine such configuration we use the notation $\mathcal{F}_{\lambda_1,\lambda_2,\mu}$. Furthermore, the one-parameter subfamily of all configurations characterized by unitary bonds, namely the collection of all configurations $\mathcal{F}_{1,1,\mu}\in\mathscr{F}$ for admissible $\mu$, will be denoted   by $\mathscr F_1$. 
 \EEE

\olive We observe  that  any configuration $\mathcal{F}\in\mathscr{F}$ is periodic in the axis direction, and \EEE it is not difficult  to check that \olive it \EEE enjoys the following properties:
%\begin{proposition}[Geometric structure of zigzag nanotubes]\label{basiczigzag}
%Let $\mathcal{F}\in\olive\mathscr{F}_1\EEE $. Then
% \begin{enumerate}
Atoms in $\mathcal{F}$ \olive both \EEE lie on the surface of a cylinder with radius $\rho$ (we denote the direction of the axis of the cylinder  by  $\mathbf e_1$)\olive,
and are \EEE arranged in planar {sections}, perpendicular to $\mathbf e_1$, obtained by fixing $j$, $k$, and $l$ in \eqref{zigzagfamilydefinition}. Each of the sections \PPP contains  exactly \EEE $\ell$ atoms, arranged at the vertices of a regular $\ell$-gon. For each section, the two closest sections are at distance $\sigma$ and $\lambda_1$, respectively.
%\item[\rm (c)] The configuration $\mathcal{F}$ is invariant under a rotation of
%  $2\pi/\ell$ around $e_1$, under the translation  $\mu e_1$, and under
%  a \PPP transformation consisting of a rotation of $\pi/\ell$ around $e_1$ and a translation along the vector $(\lambda_1+\sigma)e_1$ \emph{(}see Figure  \ref{extra}\emph{)}.  \EEE
%\item [\rm (d)] Let $i\in\{1,\ldots,\ell\} $, $j\in\mathbb{Z}$ and $k,l\in\{0,1\}$: the quadruple $(i,j,k,l)$ \PPP identifies \EEE points of $\mathcal{F}$, denoted by $x_{i,k}^{j,l}$, where $(0,j,k,l)$ \PPP is identified \EEE with $(\ell,j,k,l)$.   Given $x_{i,0}^{j,0}\in \mathcal{F}$, the two points $x_{i,1}^{j-1,1}$, $x_{i-1, 1}^{j-1,1}$ have distance $\lambda_2$ and  $x_{i, 0}^{j-1,1}$ has distance $\lambda_1$ from $x_{i,0}^{j,0}$.
% For $x_{i,0} ^{j,1}$, the distance of $x_{i,1}^{j,0}$ and  $x_{i-1,1}^{j,0}$ is $\lambda_2$ and the distance from $x_{i,0}^{j+1,0}$ is $\lambda_1$. See {\rm Figure \ref{quadruples}} for the analogous notation of $x_{i,1}^{j,0}$ and $x_{i,1}^{j,1}$.  
%  \end{enumerate} 
% \end{proposition}
% \begin{figure}[htp]
%%\begin{center}
%\pgfdeclareimage[width=0.7\textwidth]{quadruples}{figures/quadruples.pdf}
%    \pgfuseimage{quadruples}
%%\includegraphics[scale=1]{quadruples}
%\caption{ Configuration points are \PPP identified \EEE by quadruples $(i,j,k,l)$ for $i=1,\dots,\ell$, $j \in \Zz$, and $k,l \in \lbrace 0,1 \rbrace$.}
%\label{quadruples} 
%%\end{center}
%\end{figure}

We shall impose restrictions to the range of the parameters in order to avoid degeneracy of the structure. In particular, we shall impose that each atom has exactly three and only three neighbors at distance less than a reference value chosen as $1.1$ \olive in accordance to the fact that only atoms closer than $1.1$ contribute to the energy that we define  later on in Subsection \ref{energy_def}. \EEE This can be ensured by 
 some simple trigonometry which yields  suitable explicit bounds on the parameters. We do not give detail about this, apart from noticing that the bond lengths $\lambda_1$ and $\lambda_2$ will lie in a \ZZZ neighborhood \EEE of the reference value $1$, the parameter $\mu$ will lie in a neighborhood of $3$, and the bond angles will lie in a neighborhood of $2\pi/3$, which are the values of the perfect unit planar haxagonal lattice (to which we are \olive locally close for very \EEE large $\ell$).  
 For instance, $\lambda_1,\lambda_2\in (0.9,1.1)$ and $\mu\in(2.6,3.1)$ are fine choices, see \cite{FMPS}.

%Notice that for fixed $\lambda_1$ and $\lambda_2$ the other parameters
%range between two degenerate cases: $\rho=0$ (the cylinder is reduced
%to its axis) and $\sigma=0$ (sections collide).  We shall however
% impose further restrictions,  for each atom should have three
%bonds.  In particular, the only three bonds per atom should be the ones \PPP identified \EEE by point (d) of Proposition \ref{basiczigzag}.
%By recalling that two particles are bonded if their distance is less than the reference value $1.1$, since the distance between two consecutive sections is either $\lambda_1$ or $\sigma$, we require $\lambda_1 > 0.9$ and  $\sigma>0.2$. Additionally, we require $\lambda_1,\lambda_2 < 1.1$, which also implies $\sigma < 1.1$ by \eqref{eq: basic constraints}.  On the other hand, on each section, the edge of the regular $\ell$-gon should be greater than $1.1$.
%Such length is given by $
%2\rho\sin\gamma_\ell,
%$
% where $\gamma_\ell$ is the internal angle of a regular $2\ell$-gon,  \PPP i.e., \EEE
% \begin{equation}\label{gamma}
%\gamma_\ell:=\pi\left(1-\frac{1}{\ell}\right).
%\end{equation}  
%Therefore, we need to impose $\rho>\rho^-:=0.55/\sin\gamma_\ell$. With these restrictions we have the following

If we denote by $\alpha$ the two angles that are adjacent to a bond in the axis direction and by $\beta$ the remaining angles (see also Figure \ref{extra}), we have the obvious relation $2\alpha+\beta<2\pi$. \ZZZ Some \EEE more simple trigonometry (see \cite{MMPS-new, FMPS}) shows that 
\begin{equation}\label{alphars}
  \sin\alpha=\sqrt{1-(\sigma/\lambda_2)^2}=2 (\rho/\lambda_2)\sin\left(\frac{\pi}{2\ell}\right)
  \end{equation}
  and that
  \begin{equation}\label{betaz}
\beta=\beta(\alpha,\gamma_\ell):=2\arcsin\left(\sin\alpha\sin\frac{\gamma_\ell}{2}\right),
\end{equation}
 where $\gamma_\ell$ is the internal angle of a regular $2\ell$-gon,  \PPP i.e., \EEE
 \begin{equation}\label{gamma}
\gamma_\ell:=\pi\left(1-\frac{1}{\ell}\right).
\end{equation}  

%\begin{proposition}[Parametrization of the family]\label{betaproperties} Let $ \mathcal{F}\in\olive\mathscr{F}_1\EEE $ with $\rho > \rho^-$, $\sigma >0.2$ and $\lambda_1,\lambda_2 \in (0.9, 1.1)$. Then, all atoms in $\mathcal{F}$ have exactly three (first-nearest) neighbors, two at distance $\lambda_2$ and one at distance $\lambda_1$, where the bond corresponding to the latter neighbor is \PPP parallel to $e_1$. \EEE Among the corresponding  \PPP 
% three bond angles, which are smaller than $\pi$, \EEE two have amplitude
%$\alpha$ (the ones involving atoms in three different sections), and the third has amplitude $\beta$ \PPP \emph{(}see Figure \ref{extra}\emph{)}, \EEE
%  where $\alpha\in(\pi/2,\pi)$ is obtained from
%  \begin{equation}\label{alphars}
%  \sin\alpha=\sqrt{1-(\sigma/\lambda_2)^2}=2 (\rho/\lambda_2)\sin\left(\frac{\pi}{2\ell}\right)
%  \end{equation}
%  and $\beta\in(\pi/2,\pi)$ is given by
%  \begin{equation}\label{betaz}
%\beta=\beta(\alpha,\gamma_\ell):=2\arcsin\left(\sin\alpha\sin\frac{\gamma_\ell}{2}\right).
%\end{equation}
%\end{proposition}
 \begin{figure}[htp]
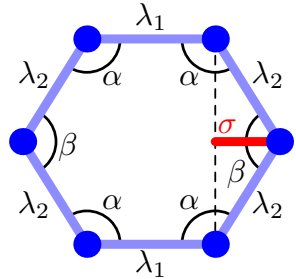

\begin{center}
\pgfdeclareimage[width=0.35\textwidth]{cell}{cell2.pdf}
    \pgfuseimage{cell}
\caption{\PPP The bond lengths and the angles for the hexagon of a configuration in \PPP $\olive\mathscr{F}\EEE$ are represented. A segment representing $\sigma$ (lying on the horizontal diagonal) is drawn in red. The hexagon is in fact not planar but kinked along the horizontal diagonal.  \EEE}
\label{extra}
\end{center}
\end{figure}

%The proof for the case $\lambda_1 = \lambda_2=1$ was  detailed  in
%\cite{MMPS}. The extension to our setting  is a straightforward
%adaption  and is therefore omitted.  As already mentioned,   the collection $\olive\mathscr{F}_1\EEE $ is a two-parameter family where all its configurations are  uniquely determined by the specification of $\lambda_1$ and $\lambda_2$. The corresponding element will be denoted by $\mathcal{F}_{\lambda_1,\lambda_2,\mu}$. Restricting the minimal period $\mu$ to the interval $(2.6,3.1)$ we observe by \eqref{eq: basic constraints} and an elementary computation that the constraints $\lambda_1,\lambda_2 \in (0.9,1.1)$ and $\ell >3$ automatically imply $0.2 < \sigma <0.65$ and $\rho>\rho^-$. Therefore, the assumptions of {\BBB Proposition} \ref{betaproperties} hold.  
%
%

\subsection{Configurational energy}\label{energy_def}
{
We shall define a {\it configurational energy} for  collections of particles of the form 
$\mathcal{C}:=C_n+L\mathbf e_1\Zz$
where $L>0$ plays the role of the macroscopic period of  repetition of the { $n$-sample} $C_n:=\{x_1,\dots,x_n\}$ of $\mathcal C$ and $\mathbf e_1$ is a given direction in $\mathbb R^3$. $C_n$ is a collection of $n$ points $x_i\in\Rz^3$ such that $x_i\cdot \mathbf e_1\in [0,L)$. %In the following, we will refer to $C_n$ as the  {\it $n$-sample} of $\mathcal{C}$, and 
Any such generic configuration  $\mathcal{C}$ is characterized by its $n$-sample $C_n$ and by $L$, so that   we always identify  $\mathcal{C}$ with the couple $(C_n, L)$.  
%  The configuration $\mathcal C$ represents a nanotube in case the points (approximately) lie on the surface of a cylinder. In such case, in the $n$-sample 
%we usually assume that the number of particles in the axial direction is much larger than the one along the cylinder section (which is reminiscent of the fact that nanotubes are very long 
%structures).
%A nanotube is typically very long and thin cylindrical structure and can be considered  of  infinite lenght in the axis direction.  \PPP Therefore, it is reasonable f
 }

We now introduce the  energy $E$ for the configuration $\mathcal C$ 
   and \PPP we detail the hypotheses 
 on $E$ that we assume throughout the paper. \EEE We aim here at minimal
assumptions in order \PPP to include \EEE in the analysis most of the
choices that have been used for \PPP computational chemistry \EEE codes
\cite{Brooks83,Clark89,Gunsteren87,Mayo90,Weiner81} and that have been proposed as empirical potentials for the description of covalent systems \cite{Brenner90,Stillinger-Weber85,Tersoff, T2, T3}. In particular, we will not give explicit forms for the \olive energy,  but \EEE only consider potential terms that satisfy some qualitative properties. This will lead to quite general results that are grounded only on natural symmetry, convexity, and monotonicity assumptions. 
%On the other hand, this choice will leave some bounds, whose quantification would require a more concrete form of the energy 

The energy $E$ features \olive {\it   two-body} and {\it three-body}  interactions among particles, \EEE that are  respectively  modeled by \olive a potential $v_2$ favoring  bonds between nearest neighbors of unitary length and a potential $v_3$ favoring angles of $120^\circ$ between such  bonds, \EEE see \eqref{E}. %The potential  $v_2$ favors unit bond lengths while $v_3$ angles with $120^\circ$, respectively. \ZZZ 
This is consistent \EEE with the modeling of $sp^2$ covalent
atomic bonding in carbon  favoring  a specific
interatomic distance and angles of $120^\circ$. 

We introduce the   \olive two-body potential \EEE $v_2: \PPP (0,\infty) \EEE\to[-1,\infty)$, which  is smooth and attains its minimum value \RRR only \EEE at $1$ with $v_2(1) = -1$ and  $v''_2(1)>0$. Moreover, we ask $v_2$ to be {\it short-ranged}, i.e., to vanish after some reference value right to $1$, say    $v_2(r)=0$
for $r\ge 1.1$. 
%These assumptions reflect the nature of covalent
%atomic bonding in carbon  favoring  a specific
%interatomic distance, here normalized to $1$.  
In this way,
 $x,y\in\mathcal{C}$ will contribute to the two-body term of the energy only if
 their distance is less than $1.1$. 
 %and we refer to the graph formed by all the  bonds as
 %the {\it bond graph} of $\mathcal{C}$. 
  Looking at the $n$-sample $C_n$ \ZZZ and taking \EEE periodicity into account, \PPP  this amounts to considering two particles $x_i$ and $x_j$ of the $n$-cell $C_n$ of $\mathcal{C}$ as contributing to the two-body part of the energy only if \EEE $|x_i-x_j|_L<1.1$, where $|\cdot|_L$ is the { distance modulo $L$ in the $\mathbf e_1$ direction} defined by
$$
|x_i-x_j|_L:=\min_{t\in\{-1,0,+1\}}|x_i-x_j+Lt\mathbf e_1|
$$
for every $x_i,x_j\in C_n$. We also let %Let us denote by $\mathcal{N}$ the set of all couples of indices corresponding to bonded particles,  \PPP i.e., \EEE
$
\mathcal{N}:=\{(i,j)\,:\,\, \textrm{$x_i$, $x_j\in C_n$, $i\neq j$, and $|x_i-x_j|_L<1.1$}\}
$ be the set of first-neighboring atoms.

We introduce the \olive three-body potential \EEE
$v_3: [0,2\pi]\to[0,\infty)$
and we assume \olive that $v_3$  is \EEE smooth and  symmetric around $\pi$, i.e., 
$v_3(\alpha)=v_3(2\pi{-}\alpha)$. Moreover, we suppose that the minimum value $0$ of $v_3$
is attained only at $2\pi/3$ and $4\pi/3$ with $v_3''(2\pi/3)>0$. We let 
%$\mathcal{T}$  be the index set  \PPP of \EEE the triples corresponding to first-neighboring particles,  \PPP i.e., \EEE
$
\mathcal{T}:=\{(i,j,k)\,:\,\, \textrm{$i\neq k$, $(i,j)\in\mathcal{N}$ and $(j,k)\in\mathcal{N}$}\}.
$ 
For all triples $(i,j,k)\in\mathcal{T}$ we denote by  $\alpha_{ijk} \in  \RRR [0,\pi] \EEE $
the {\it bond angle} formed by the vectors $x_i-x_j$ and
$x_k-x_j$. 
%The assumptions on $v_3$ reflect the basic geometry of
%carbon bonding in a nanotube: Each atom presents three $sp^2$-hybridized
%orbitals, which tend to form $2\pi/3$ angles. 

\olive The \EEE configurational energy $E$ of  $\mathcal{C}=(C_n, L)$ \olive is defined \EEE by
\begin{equation}\label{E}
E(\mathcal{C})=E(C_n,L):=\frac12\sum_{(i,j)\in \mathcal{N}}v_2(|x_i{-}x_j|_L) + \frac12\sum_{(i,j,k)\in\mathcal{T}}v_3(\alpha_{ijk}),
\end{equation} 
\PPP where the factors $1/2$ are included to avoid double-counting \ZZZ of \EEE the interactions among same atoms. \ZZZ Since \EEE the energy is invariant \ZZZ under \EEE isometries, all statements involving $E$ in the following \ZZZ are \EEE considered to hold up to isometries.
 %We say that a nanotube $\mathcal{C}=(C_n,L)$ is {\it stable} if $(C_n,L)$ is a  strict  local minimizer of the interaction  energy   $E$. 

%%%%%%%%%%%%%%%%%%%%%%%%%%%%%%%%%%%%%%
%%%%%%%%%%%%%%%%%%%%%%%%%%%%%%%%%%%%%%
%%%%%%%%%%%%%%%%%%%%%%%%%%%%%%%%%%%%%%

%In this section we collect  our main results. The corresponding
%proofs will then be presented in Sections \ref{sec: main proof}-\ref{sec:
  %cellenery}. 

For  \PPP a \EEE fixed integer $\ell>3$, let us consider a configuration $\mathcal{F}$  in the family
$\olive\mathscr{F}\EEE$. \PPP As $\mathcal{F}$ is periodic, it can be identified with the couple \EEE $(F_n, L)$, where $F_n$ is the corresponding $n$-cell ($n=4m\ell$ for some $m\in \mathbb{N}$),
and
\begin{equation}\label{zigzagperiod}
L = L^\mu_m:= m\mu
\end{equation}
can be seen as the length of the $n$-sample (notice that for $m=1$ we get the minimal period of the configuration).
We note that it is natural to consider the regime $m\gg\ell$, which \ZZZ accounts for \EEE the fact \olive that \EEE the nanotube is a thin and long structure.
 In view of \eqref{E} and \eqref{betaz}, the energy can be written as 
\begin{align}\label{basicenergy}
E(\mathcal{F}) = E(F_n, L^\mu_m) = \frac{n}{2} \big( v_2(\lambda_1) + 2v_2(\lambda_2)  \big) + n\big(2v_3(\alpha) + v_3(\beta(\alpha,\gamma_\ell))\big).
\end{align}
In the particular case $\lambda_1=\lambda_2=1$, we can reduce to a function $\mathcal E$ of the angle $\alpha$ only, i.e.,
\begin{equation*}\label{=1}
E(\mathcal F)=\mathcal E (\alpha):=-\frac32 n+n(2 v_3(\alpha)+v_3\left(\beta(\alpha, \gamma_\ell))\right).
\end{equation*}

{\BBB 
%Let us consider 
%
%A first natural problem to be considered is the energy minimization restricted to the families $\olive\mathscr{F}_1\EEE $, with the values of $\mu$ in the reference interval $\mu\in (2.6, 3.1)$. Let us denote by $\mathcal{F}_{\lambda_1,\lambda_2,\mu}$ an element of $\olive\mathscr{F}_1\EEE $ with bond lengths $\lambda_1,\lambda_2$. 

\section{Main local minimality results}\label{sec: mainresults}

\subsection{Minimization in the family $\olive\mathscr{F}_1\EEE $}
Let us take the minimal period $\mu$ for a configuration in $\olive\mathscr F\EEE$ in some reference interval around the value $3$, 
say $\mu\in (2.6,3.1)$. We notice that $3$ is the minimal period in case of unit bonds $\lambda_1=\lambda_2=1$ and in case of a planar structure (which is ideally realized by $\ell=+\infty$). 
The minimization problem
\[
\min\{E(\mathcal F): \mathcal F =\ZZZ \mathcal F_{\lambda_1,\lambda_2,\mu},\; \EEE \lambda_1\in(0.9,1.1),\;\lambda_2\in(0.9,1.1)\}
\]
 is solved by $\lambda_1=\lambda_2=1$.
%if we minimize among nanotubes $\mathcal{F}_{\lambda_1,\lambda_2,\mu}$ with respect to $\mu\in (2.6,3.1)$   and  $\lambda_1,\lambda_2$ in a neighborhood of $1$,  we  reduce to the case $\lambda_1=\lambda_2=1$}. 
\RRR This is clear \olive by \EEE \eqref{basicenergy}, by changing   $\lambda_1,\lambda_2$ to $1$ and by leaving $\alpha$ unchanged.    
%We first identify the nanotube with the minimal energy by considering the minimization problem
%$$ \min\big\{ E(\mathcal{F}_{\lambda_1,\lambda_2,\mu})| \ \mathcal{F}_{\lambda_1,\lambda_2,\mu} \in \olive\mathscr{F}_1\EEE , \ \mu \in (2.6,3.1), \ \lambda_1,\lambda_2 \in (0.9,1.1) \big\} =:E_{\rm min}.  $$
%Taking \eqref{eq: basic constraints} and  \eqref{basicenergy}   into account, we find
%\begin{align}\label{min} 
%E_{\rm min}  = \min\big\{ E(\mathcal{F}_{1,1,\mu} )| \ \mathcal{F}_{1,1,\mu} \in \olive\mathscr{F}_1\EEE , \ \mu \in (2.6,3.1) \big\}.
%\end{align}
\ZZZ We  \EEE stress that in case $\lambda_1=\lambda_2=1$, thanks  to  \eqref{eq: basic constraints} and \eqref{alphars}, we have $\mu = 2 (1-\cos\alpha)$, so that the one-parameter family $\olive\mathscr{F}_1\EEE $ can be reparametrized in terms of $\alpha$.
Therefore, we have in principle the two equivalent minimization problems 
\[
\min\{E(\mathcal F): \mathcal F\in \olive\mathscr{F}_1\EEE ,\;\mu\in(2.6,3.1)\},\qquad\min \{\mathcal E(\alpha): \alpha\in(\arccos(-0.3),\arccos(-0.55))\}.
\]
The latter {\BBB  minimization problem \ZZZ in one variable \EEE for the map $\alpha\mapsto \mathcal E(\alpha)$ has} been investigated in \cite[Theorem 4.3]{MMPS}. 
It may require \ZZZ to work \EEE on a smaller interval of values of $\alpha$ around $2\pi/3$, depending on the specific choice of $v_3$. The result in  \cite[Theorem 4.3]{MMPS} is the following\olive:\EEE

\begin{proposition}\label{eq: old main result}
There exist an open interval $A$ and $\ell_0 \in \Nz$ only depending on  $v_3$  such \PPP  that \EEE the following holds for all $\ell \ge \ell_0$:  There is a unique angle $\alpha^{\rm us}_\ell \in A$ such that $\mathcal{E}(\alpha^{\rm us}_\ell)$ minimizes $\mathcal E$ on ~$A$. %Moreover, one has $\alpha^{\rm us}_\ell \in  (\alpha^{\rm ch}_\ell,\alpha^{\rm ru}) \subset A$.
\end{proposition}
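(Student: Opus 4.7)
The plan is to view $\mathcal{E}$, divided by $n$, as a one-parameter smooth family of one-variable functions indexed by $\ell$, and to establish strict convexity plus existence of an interior critical point by a perturbation argument around the $\ell \to \infty$ limit. Set $f_\ell(\alpha) := 2v_3(\alpha) + v_3(\beta(\alpha,\gamma_\ell))$, so that $\mathcal{E}(\alpha) = -\tfrac{3}{2}n + n\,f_\ell(\alpha)$, and let $\alpha_0 := 2\pi/3$.

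\textbf{Step 1: The limit profile.} Since $\gamma_\ell = \pi(1-1/\ell) \to \pi$, one has $\sin(\gamma_\ell/2) \to 1$. For $\alpha$ in a compact neighborhood of $\alpha_0 \in (\pi/2,\pi)$, the product $\sin\alpha \sin(\gamma_\ell/2)$ stays bounded away from $1$ and the map $(\alpha,\ell) \mapsto \beta(\alpha,\gamma_\ell) = 2\arcsin(\sin\alpha \sin(\gamma_\ell/2))$ is smooth with all derivatives in $\alpha$ converging uniformly to those of $\alpha \mapsto 2\arcsin(\sin\alpha) = 2\pi-2\alpha$. Hence $f_\ell \to f_\infty$ in $C^2$ on this neighborhood, where $f_\infty(\alpha) := 2v_3(\alpha) + v_3(2\pi-2\alpha)$.

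\textbf{Step 2: Analyze $f_\infty$ at $\alpha_0$.} From the assumed symmetry $v_3(\theta) = v_3(2\pi-\theta)$ and the fact that $v_3$ achieves its minimum only at $2\pi/3$ and $4\pi/3$, one gets $v_3(2\pi-2\alpha_0) = v_3(2\pi/3) = 0$ and $v_3'(2\pi/3) = 0$. A direct computation then yields
\begin{align*}
f_\infty'(\alpha_0) &= 2v_3'(\alpha_0) - 2v_3'(2\pi/3) = 0, \\
f_\infty''(\alpha_0) &= 2v_3''(\alpha_0) + 4v_3''(2\pi/3) = 6\,v_3''(2\pi/3) > 0,
\end{align*}
so $\alpha_0$ is a strict local minimum of $f_\infty$ with a positive definite second derivative.

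\textbf{Step 3: Fix $A$ and transfer the properties.} By continuity of $v_3''$, choose an open interval $A \subset (\arccos(-0.3),\arccos(-0.55))$ centered at $\alpha_0$ and a constant $c>0$ such that $f_\infty''(\alpha) \ge c$ for all $\alpha \in \overline{A}$. By the $C^2$-convergence from Step 1, there exists $\ell_0 \in \mathbb{N}$ (depending only on $v_3$, since $A$ does) such that for every $\ell \ge \ell_0$, $f_\ell''(\alpha) \ge c/2$ on $\overline{A}$ and $|f_\ell'(\alpha_0)|$ is so small that $f_\ell'$ has opposite signs at the two endpoints of $A$.

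\textbf{Step 4: Conclusion.} On $A$, the strict convexity of $f_\ell$ forces $f_\ell'$ to be strictly increasing, hence by the sign change it has exactly one zero $\alpha_\ell^{\rm us} \in A$, which is the unique critical point and, by strict convexity, the unique minimizer of $f_\ell$ (equivalently of $\mathcal{E}$) on $A$.

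The only non-routine point is ensuring that the $C^2$ convergence $f_\ell \to f_\infty$ is uniform on a neighborhood of $\alpha_0$ that is independent of $\ell$; this is the main technical check, but it reduces to the harmless fact that $\sin\alpha\sin(\gamma_\ell/2)$ remains uniformly bounded away from $\pm 1$ on such a neighborhood, so that the $\arcsin$ in the definition of $\beta$ does not approach its singular points.
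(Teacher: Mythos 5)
Your proposal is correct and follows essentially the route the paper indicates: the paper does not reproduce a proof (it cites \cite[Theorem 4.3]{MMPS}) but states that the result follows from the convexity and monotonicity of $v_3$ near $2\pi/3$ together with the properties of $\beta(\alpha,\gamma_\ell)$, which is precisely what your perturbation-from-the-planar-limit argument implements. The computations ($f_\infty''(2\pi/3)=6v_3''(2\pi/3)>0$, uniform $C^2$-convergence since $\sin\alpha\sin(\gamma_\ell/2)$ stays away from $1$ near $2\pi/3$, and the sign change of $f_\ell'$ at the endpoints of $A$) are all sound.
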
  
The 
 minimal period of the nanotube corresponding to $\alpha_\ell^{\rm us}$  is given by 
%\begin{align}\label{eq:muellneu}
$\mu^{\rm
  us}_\ell := 2 - 2\cos\alpha^{\rm us}_\ell$,
  %\end{align}
  so that 
 $\mathcal{F}_{1,1,\mu^{\rm
    us}_\ell}$ is the optimal configuration in $\olive\mathscr{F}_1\EEE $
   for $\mu$ varying in a suitable interval around $3$ (depending on $v_3$). 
      {\BBB Nanotubes with $\mu=\mu_\ell^{\rm us}$ will be referred to as  {\it unstretched}, namely {\it (us) nanotubes}, as they  feature unit bond lengths.
      In contrast, later we will consider optimal  strechted nanotubes with different bond lengths.
      
%    corresponding to the reference choice of the minimal period according to Proposition \ref{eq: old main result}.
    }

%\begin{theorem}[Existence and uniqueness of minimizer]\label{cor} Letting $\ell>3$, there exists a unique
%  nanotube $\mathcal{F}^s_\ell\in \mathscr{F}_\ell^s$ minimizing $E$
%  within $\mathscr{F}_\ell^s$. This corresponds to the unique minimizer of $\alpha\mapsto
%  \mathcal{E}^s(\alpha,\gamma_\ell)$ in $(2\pi/3-\eps, \sigma_\eps^s(\gamma_\ell))$.
%\end{theorem}

%Let us report the idea of the proof. Exploiting the monotonicity
%properties of $v_3$ and $\beta$ (the latter being decreasing as a
%function of $\alpha$), one derives that the minimum  is  attained for $\alpha$ in a small left neighborhood $I$ of $2\pi/3$, \PPP e.g., $I:=(2\pi/3-\sigma, 2\pi/3]$ for some small $\sigma>0$. \EEE Using in addition the convexity of $v_3$ and the concavity of $\beta$, it follows that $\alpha \mapsto E(\mathcal{F}) = -3n/2  + n\big(2v_3(\alpha) + v_3(\beta(\alpha,\gamma_\ell))\big)$ is strictly convex in $I$, which implies the assertion.

The above result is proven easily by means of the convexity and monotonicity assumptions \ZZZ on  \EEE $v_3$ around $2\pi/3$ and \ZZZ the properties \EEE of the function $\beta(\alpha,\gamma_\ell)$.
We mention that the result in \cite{MMPS} further shows that neither the \PPP {\it polyhedral} \cite{Cox-Hill07, Cox-Hill08} \EEE nor the
classical {\it rolled-up} \cite{Dresselhaus-et-al95} configuration is a local minimizer of the energy $E$. 
The \olive polyhedral and rolled-up configurations are \EEE classical geometric definitions of carbon nanotubes, and in the zigzag case they \olive correspond, \EEE see \cite{MMPS}, by particular choices of $\alpha$, \ZZZ namely   \olive
$$\alpha
= 2\pi/3 \qquad\mbox{and} \qquad \alpha=\beta(\alpha,\gamma_\ell)=\arccos\left(\frac{1-2\sin^2(\gamma_\ell/2)}{2\sin^2(\gamma_\ell/2)}\right), $$ 
respectively. \EEE%is  the unique solution of the equation  $\beta(\alpha^{\rm ch}_\ell, \gamma_\ell) = \alpha^{\rm ch}_\ell$ in $(\arccos(-0.4), \arccos(-0.55))$. The latter can be written explicitely, see \cite{MMPS}.

We wish to prove that $\mathcal F_{1,1,\mu^{\rm us}_\ell}$
is in fact a local minimizer with respect to any small perturbation. 
More precisely, if $x_1^{\rm us},\ldots x_n^{\rm us}$ are the points in the $n$-sample of $\mathcal F_{1,1,\mu^{\rm us}_\ell}$, we shall prove local minimality with respect to any configuration $\tilde {\mathcal F}$ that belongs to
\begin{align}\label{etaold}
\begin{split}
\mathscr{P}_\eta = \lbrace \tilde{\mathcal{F}} = (F_n,L)| \ F_n := \lbrace x_1,\ldots,x_n \rbrace \ \text{ with } |x_i - x_i^{\rm us}| \le \eta,\; |L-m\mu_\ell^{\rm us}|<\eta \rbrace
\end{split}
\end{align}
for $\eta$  small enough. \ZZZ Here, \EEE $\eta$ is always assumed to be so \olive small that \EEE the first neighboring relations of the perturbed configurations are unchanged, and no new couples of points reach a distance (modulo $L$) which is smaller than $1.1$, \olive that would indeed result in further contributions to the energy. \EEE

A first crucial role towards local minimality  is played by \olive adopting a \EEE energy summation strategy. We may compare two different approaches. The first one is the summation of the energy \olive point-by-point \EEE , which leads to a very clear and simple, although incomplete proof.

  \subsection{\olive Point-by-point \EEE  summation of the energy}\label{pointsection}
 
 Let us denote by $\mathcal F^*=\mathcal F_{1,1,\mu^{\rm us}_\ell}$ the unique optimal nanotube in the family \ZZZ $\olive\mathscr{F}_1\EEE $ \EEE that is provided by Proposition \ref{eq: old main result}, when $\mu$ varies in a suitable neighborhood of the value $3$. We stress again that such neighborhood is not explicit here since it depends on the specific choice of the three-body potential $v_3$ which is left fairly general in our discussion. Indeed, it depends on the amplitude of the interval of strict convexity of $v_3$ around $2\pi/3$. We refer to  \cite{MMPS-new} for a more precise quantification.
 When considering a perturbation $\tilde {\mathcal F}\in\mathscr P_\eta$, see \eqref{etaold}, at each point $x_i$, $i=1,\ldots, n$, of the ($n$-sample of the) perturbed configuration we have
 \begin{itemize}
\item three bond lengths $\bar \lambda_{1}^i$, $\bar\lambda_{2}^i$, $\bar \lambda_3^i$, each in a neighborhood of $1$;
\item
 two bond angles $\alpha_1^i$, $\alpha_2^i$: these are the perturbations of the  angle  $\alpha_\ell^{\rm us}$ of $\mathcal F^*$;
 \item
 the {\it nonplanarity angle} $\gamma^i$,  \MMM i.e., the angle between the two planes that contain respectively $\alpha_1^i$ and $\!\alpha^i_2$ \EEE  ($\gamma^i=\pi$ in the limit  planar case obtained for large $\ell$, and  $\gamma^i=\gamma_\ell$ if $\tilde{\mathcal F}\in\mathscr{F}\EEE$);
 \item a third bond angle $\beta^i$ which can be written as a function of $\alpha_1^i,\alpha_2^i,\gamma^i$ by a simple computation, i.e., $\beta^i=\tilde\beta(\alpha_1^i,\alpha_2^i,\gamma^i)$, where the function $\tilde\beta:[\pi/2,3\pi/4]\times[\pi/2,3\pi/4]\times[3\pi/4,\pi]\to\mathbb R$ is defined \ZZZ by \EEE
 $
% \begin{equation}\label{thetwo1}
\tilde\beta\EEE(\alpha_1,\alpha_2,\EEE \gamma \EEE):=2\arcsin\left(\sqrt{(1-\sin\alpha_1\sin\alpha_2\cos\EEE
    \gamma \EEE-\cos\alpha_1\cos\alpha_2)/2}\right).$
%\end{equation}
 \end{itemize}
 %By convexity and symmetry reasons it is convenient to have $\alpha_1=\alpha_2=\alpha$. 
 The angle energy contribution of  each point $x_i$ of the configuration $\tilde{\mathcal F}$ is therefore
 \begin{equation*}%\label{ecke1}
 \EEE \tilde E
\EEE(\alpha_1^i,\alpha_2^i,\EEE \gamma^i \EEE):=v_3(\alpha_1^i)+v_3(\alpha_2^i)+v_3(
\EEE  \tilde\beta \EEE(\alpha_1^i,\alpha_2^i,\EEE \gamma^i \EEE)),
\end{equation*}
so that the \olive point-by-point \EEE  energy summation reads
\begin{equation}\label{summation1}
E(\tilde{\mathcal{F}})= \frac12\sum_{i=1}^n \left(v_2(\bar\lambda_1^i)+v_2(\bar \lambda_2^i)+v_2(\bar\lambda_3^i)\right)+\sum_{i=1}^n \tilde E
\EEE(\alpha_1^i,\alpha_2^i,\EEE \gamma^i \EEE).
\end{equation}
\ZZZ We define \EEE $\sigma_0(\gamma):=\pi-\arcsin\left(\frac{\sqrt{3}}{2\sin(\gamma/2)}\right)<\frac23 \pi$ and collect the basic properties that arise from the study of the function $\tilde E$. For the proof we refer to \cite[Section 4]{MMPS-new}. 
\begin{proposition}\label{etilde}

 There exists $\gamma_0\in\mathbb N$ (depending only on $v_3$) such that for any $\gamma\in(\gamma_{0},\pi)$, the mapping 
$(\alpha_1,\alpha_2)\mapsto\tilde E(\alpha_1,\alpha_2,\gamma)=\tilde E(\alpha_2,\alpha_1,\gamma)$ is continuous and  strictly convex on $[\sigma_0(\gamma),2\pi/3]^2$.  Denoting its unique minimizer by  $(\alpha_\gamma,\alpha_\gamma)$, %Moreover, $\gamma_{\ell_0}=\pi-\pi/\ell_0\in(\gamma_0,\pi))$
 there holds $\alpha_\gamma\in(\sigma_0(\gamma),2\pi/3)$.
Moreover, the map 
$
\gamma\mapsto \tilde E(\alpha_\gamma,\alpha_\gamma,\gamma)
%\EEE\mathscr{E}\EEE (\gamma):=\EEE \mathcal{E} (\alpha
%\EEE(\gamma),\gamma)\EEE  = \min\{ \mathcal{E}(\alpha,\gamma)\;|\;
%\alpha \in (\sigma_0(\gamma),2\pi/3)\} \EEE
$
 is strictly decreasing and strictly convex on $(\EEE \gamma_0,\pi)$.
\end{proposition}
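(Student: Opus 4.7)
The plan is to view $\tilde E$ as a smooth perturbation of its value at the planar regular-hexagon point $(\alpha_1,\alpha_2,\gamma)=(2\pi/3,2\pi/3,\pi)$, at which $\tilde\beta=2\pi/3$ as well, and to propagate strict convexity, the location of the minimizer, and the monotonicity/convexity in $\gamma$ to an $\eta$-neighborhood $\gamma\in(\gamma_0,\pi)$ by continuity. Continuity of $\tilde E$ is immediate from the smoothness of $v_3$ and of $\tilde\beta$, so the proof reduces to a quantitative perturbation argument about this distinguished point. A useful preliminary observation (a short trigonometric check based on the defining formula of $\tilde\beta$) is that the threshold $\sigma_0(\gamma)$ is exactly the value for which $\tilde\beta(\sigma_0(\gamma),\sigma_0(\gamma),\gamma)=2\pi/3$, so that on the whole box $[\sigma_0(\gamma),2\pi/3]^2$ one has $\tilde\beta\le 2\pi/3$.

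To establish strict convexity in $(\alpha_1,\alpha_2)$, I would compute the Hessian of $\tilde E$ via the chain rule on $v_3\circ\tilde\beta$. The diagonal picks up $v_3''(\alpha_i)$, which is positive since the $\alpha_i$ stay near $2\pi/3$. The remaining contributions read $v_3''(\tilde\beta)(\partial_i\tilde\beta)(\partial_j\tilde\beta)+v_3'(\tilde\beta)\,\partial_i\partial_j\tilde\beta$; the first factor is positive because the preliminary observation places $\tilde\beta$ in the strict-convexity region of $v_3$, while the prefactor $v_3'(\tilde\beta)$ is small when $\gamma$ is close to $\pi$ (since then $\tilde\beta$ is close to $2\pi/3$). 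A uniform continuity argument on the closed box $[\sigma_0(\gamma),2\pi/3]^2$ then yields positive definiteness for $\gamma\in(\gamma_0,\pi)$, for some $\gamma_0$ sufficiently close to $\pi$. Given strict convexity, the symmetry $\tilde E(\alpha_1,\alpha_2,\gamma)=\tilde E(\alpha_2,\alpha_1,\gamma)$ forces the unique minimizer onto the diagonal, so it has the stated form $(\alpha_\gamma,\alpha_\gamma)$. The openness $\alpha_\gamma\in(\sigma_0(\gamma),2\pi/3)$ follows by evaluating $\partial_1\tilde E$ on the diagonal at the two endpoints: at $\alpha_1=\alpha_2=2\pi/3$ the term $v_3'(2\pi/3)$ vanishes while $v_3'(\tilde\beta)\,\partial_1\tilde\beta\neq 0$ for $\gamma<\pi$ with a sign that forces $\alpha_\gamma<2\pi/3$; at $\alpha_1=\alpha_2=\sigma_0(\gamma)$ instead $\tilde\beta=2\pi/3$ so $v_3'(\tilde\beta)$ vanishes and only $v_3'(\sigma_0(\gamma))$ survives, with opposite sign.

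For the one-variable map $\phi(\gamma):=\tilde E(\alpha_\gamma,\alpha_\gamma,\gamma)$, I would apply the envelope theorem to get
\[
\phi'(\gamma)=\partial_\gamma\tilde E(\alpha_\gamma,\alpha_\gamma,\gamma)=v_3'(\tilde\beta)\,\partial_\gamma\tilde\beta.
\]
A direct computation on the formula for $\tilde\beta$ shows $\partial_\gamma\tilde\beta>0$ on $(0,\pi)$ (the quantity under $\arcsin$ is increasing in $\gamma$), while $v_3'(\tilde\beta)<0$ since $\tilde\beta<2\pi/3$. Hence $\phi'<0$, giving strict decrease. For the strict convexity of $\phi$, differentiate once more, substituting the implicit-function expression $\alpha_\gamma'=-\partial_\gamma\partial_1\tilde E/(\partial_1^2\tilde E+\partial_1\partial_2\tilde E)$: the denominator is positive by the Hessian analysis of the previous step, and at $\gamma=\pi$ the whole expression for $\phi''$ reduces to an explicit, positive combination of $v_3''(2\pi/3)$ and derivatives of $\tilde\beta$ at the regular hexagon. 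Positivity of $\phi''$ thus persists on $(\gamma_0,\pi)$ after possibly shrinking $\gamma_0$.

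The main obstacle I anticipate is the uniform positive-definiteness of the Hessian of $(\alpha_1,\alpha_2)\mapsto\tilde E$ on the \emph{whole} closed box $[\sigma_0(\gamma),2\pi/3]^2$, not merely near its upper corner: the mixed term $v_3''(\tilde\beta)\,\partial_1\tilde\beta\,\partial_2\tilde\beta$ could in principle compete with the diagonal entries, and controlling it requires either explicit (messy) bounds on $\partial_i\tilde\beta$ and $\partial_i\partial_j\tilde\beta$, or a careful continuity argument with constants uniform over the closed box, which is what ultimately fixes how small $\pi-\gamma_0$ must be taken.
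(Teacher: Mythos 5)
Your overall strategy (perturbation around the planar reference point $(\alpha_1,\alpha_2,\gamma)=(2\pi/3,2\pi/3,\pi)$, chain-rule Hessian, continuity) is the same one used for this result in \cite{MMPS-new} and for its generalization to $E_{\rm red}$ in \cite{FMPS}, and most of your steps are sound. In particular, the obstacle you flag at the end is not actually an obstacle: since $\sigma_0(\gamma)=\pi-\arcsin\bigl(\sqrt{3}/(2\sin(\gamma/2))\bigr)\to 2\pi/3$ as $\gamma\to\pi^-$, the box $[\sigma_0(\gamma),2\pi/3]^2$ collapses onto the corner $(2\pi/3,2\pi/3)$, so for $\gamma$ close to $\pi$ the whole box sits inside any prescribed neighborhood of the reference point, where the Hessian is a small perturbation of $v_3''(2\pi/3)\left(\begin{smallmatrix}2&1\\1&2\end{smallmatrix}\right)$; uniformity over the box is automatic. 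Your arguments for the diagonal location of the minimizer, for $\alpha_\gamma\in(\sigma_0(\gamma),2\pi/3)$ via the endpoint signs of $\partial_1\tilde E$, and for the strict decrease of $\phi(\gamma):=\tilde E(\alpha_\gamma,\alpha_\gamma,\gamma)$ via $\phi'=v_3'(\tilde\beta)\,\partial_\gamma\tilde\beta<0$ are all correct.

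The genuine gap is in the strict convexity of $\phi$. At $\gamma=\pi$ one has $\tilde\beta=2\pi/3$, hence $v_3'(\tilde\beta)=0$, and also $\partial_\gamma\beta(2\pi/3,\pi)=0$, so every term in
$\phi''=v_3''(\beta)\bigl(\partial_\alpha\beta\,\alpha_\gamma'+\partial_\gamma\beta\bigr)\partial_\gamma\beta+v_3'(\beta)\bigl(\partial^2_{\alpha\gamma}\beta\,\alpha_\gamma'+\partial^2_{\gamma\gamma}\beta\bigr)$
vanishes: $\phi''(\pi)=0$, consistently with the fact that $\phi\ge 0=\phi(\pi)$ and $\phi(\gamma)=O((\pi-\gamma)^4)$ because $2\pi/3-\alpha_\gamma$ and $2\pi/3-\beta$ are both $O((\pi-\gamma)^2)$. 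So there is no positive quantity at the reference point to propagate by continuity, and ``positivity persists after shrinking $\gamma_0$'' does not follow --- continuity only tells you $\phi''$ is close to zero, with undetermined sign. Both surviving contributions are of order $(\pi-\gamma)^2$ (for instance $\partial_\gamma\beta\sim\tfrac{\sqrt3}{2}(\pi-\gamma)$ and $v_3'(\beta)\sim v_3''(2\pi/3)(\beta-2\pi/3)=O((\pi-\gamma)^2)$, the latter multiplying $\partial^2_{\gamma\gamma}\beta(2\pi/3,\pi)=-\sqrt3/2$), and one must compute their leading coefficients explicitly --- including $\alpha_\gamma'$ to first order via the implicit function theorem --- and verify that the sum is strictly positive. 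This is exactly the degenerate $\ell^{-2}$-scaled computation carried out in \cite{MMPS-new} and, for the reduced energy, in \cite[Section 6]{FMPS}, where $\partial^2_{\gamma\gamma}E_{\rm red}$ is shown to equal $\ell^{-2}$ times an explicit positive constant up to $O(\ell^{-3})$. Without this extra order of the Taylor expansion your proof of the convexity of $\phi$ does not close.
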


%where 
%\[
%\tilde \beta=\tilde\beta(\alpha_1,\alpha_2;\gamma):=2\arcsin\left(\sqrt{(1-\sin\alpha_1\sin\alpha_2\cos\gamma-\cos\alpha_1\cos\alpha_2)/2}\right)
%\]

 % \begin{center}
   % \pgfdeclareimage[width=50mm]{tripunto2}{repres_graz/tripunto2} \pgfuseimage{tripunto2}
  %\end{center}

%We optimize 
%with fixed parameter $\gamma$, obtaining a unique minimizer $\alpha^*(\gamma)$.

%\bigskip

%The function {\color{red} $\mathcal{E}(\gamma):=2V_3(\alpha^*(\gamma))+V_3(\beta(\alpha^*(\gamma),\gamma))$}

%is decreasing and convex (at the left of $\pi$). 

%
%
%
%Triple junction angle energy 
%\[
%\tilde E(\alpha_1^i,\alpha_2^i)
%\]

%This is true for large enough $\ell$.

%\smallskip
%
%This can be shown in a suitable left neighborhood of $\pi$, depending on $V_3$. 
%
%
%\bigskip
%
%Suppose therefore that $\ell$ is large enough (i.e. large enough nanotube).
%
%\smallskip

%Given $\mathcal{F}^*$, in any smally perturbed configuration $\widetilde{\mathcal{F}}^*$,
%the $i$-th triple point is characterized by angles $\alpha^i,\gamma^i$. 

Now let $\ell$ be large enough so that $\gamma_\ell\in(\gamma_0,\pi)$.
If we assume that the mean value $\gamma_{\rm mean}$ of the angles $\gamma^i$ is smaller than $\gamma_\ell=\pi-\pi/\ell$, 
starting from \eqref{summation1}, since $v_2\ge -1$ and using
  %$E(\widetilde{\mathcal{F}}^*)\ge E(\mathcal{F}^*)$
 the convexity and monotonicity properties of $\tilde E$ from  Proposition \ref{etilde}, letting $\alpha^i=(\alpha_1^i+\alpha_2^i)/2$ we obtain

\vspace{-0.2cm}
\begin{equation}\label{oldproof}\begin{aligned}
E(\ZZZ \tilde{\mathcal{F}} \EEE )&
\ge -\frac32n +\sum_{i=1}^n \tilde E(\alpha_1^i,\alpha_2^i,\gamma^i)
\\&\ge
  -\frac32n +\sum_{i=1}^n \tilde E(\alpha_{\gamma^i},\alpha_{\gamma^i},\gamma^i)
%&\ge -\frac32 n+\sum_{i=1}^n\mathcal{E}(\gamma^i)
\ge -\frac32 n+
n\tilde{E}(\alpha_{\gamma_\ell},\alpha_{\gamma_\ell},\gamma_\ell)= E(\mathcal{F}^*).
\end{aligned}\end{equation}
On the other hand, if $\gamma_{\rm mean}-\gamma_\ell>0$, there is an {\it angle defect} that does not allow to conclude: the last inequality  fails in general. In order to get rid of the angle defect, we shall resort to a different energy summation.

%\[
%E(\tilde{\mathcal{F}})=E(F_n,L)=\frac12\sum_{i=1}^n\sum_{k=1}^3 v_2(|x_i-x_{i_k}|_L)+\sum_{i=1}^n \tilde E^s(\alpha^1_i,\alpha^2_i,\gamma^s_i),
%\]
%where $x_{i_k}$, $k=1,2,3$, are the three points that are bonded to
%$x_i$. 
%%and $\alpha^1_i$, $\alpha^2_i$, and $\gamma^s_i$ are the three
%%bond angles defining the local geometry of the configuration at $x_i$,
%%see Section \ref{setting}.  
%We  recall that $\overline\gamma$ is given by \eqref{media} and that we are assuming $\bar \gamma \leq \gamma_\ell$.
%% , and where we have set $\gamma_1^j:=\theta^j$ and $\gamma_2^j:=\delta^j$, $j=1,\ldots n$.
% %$$\tilde E_2(\alpha^1^j,\alpha^2^j,\delta^j):=v_3(\alpha^1^j)+v_3(\alpha^2^j)+v_3(\tilde\beta_2(\alpha^1^j,\alpha^2^j,\delta^j)).$$
%\EEE As \EEE  $v_2\ge-1$, we get the bound
%\[
%E(\EEE \tilde{\mathcal{F}}\EEE)\ge -\frac32\,n\EEE + \sum_{i=1}^n \tilde E^s(\alpha^1_i,\alpha^2_i,\gamma^s_i).\EEE
%\]
%%Instead, by expressing the third bond angle at the $j$-th atom in terms of $\delta^j$ through  the second formula in \eqref{thetwo}, we have for a perturbation $\mathcal{C}_a$ of $\mathcal{F}_a^\ell$
%%\[
%%E(\mathcal{C})\ge -\frac32\,n+\sum_{i=1}^n \hat E_2(\alpha^1^j,\alpha^2^j,\delta^j),
%%\]
%\EEE By letting $\alpha_i:=(\alpha^1_i+\alpha^2_i)/2$ \EEE we have
%\[
%\sum_{i=1}^n\EEE \tilde E^s(\alpha^1_i,\alpha^2_i,\gamma^s_i)\ge\sum_{i=1}^n \mathcal{E}^s(\alpha_i,\gamma_i^s)\ge\sum_{i=1}^n 
%\mathcal{E}^s(\alpha_i({\gamma_i^s}),\gamma_i^s)=\sum_{i=1}^n \mathscr{E}^s(\gamma_i^s)\ge n \mathscr{E}^s(\overline\gamma)\EEE
%\] 
%
%

\subsection{Axial displacement}  Let us now move forward
to the case of {\it stretched} nanotubes, which we define, by now in the family \olive $\mathscr{F}$, \EEE to be the nanotubes with
   $\mu \ZZZ \ge \EEE \mu^{\rm us}_\ell$. 
For   fixed $\mu \in (2.6,3.1)$, $\mu\neq \mu_\ell^{\rm us}$, we consider the minimization problem 
\begin{align}\label{min2} 
E_{\rm min}(\mu)  = \min\big\{ E(\mathcal{F}_{\lambda_1,\lambda_2,\mu})| \ \mathcal{F}_{\lambda_1,\lambda_2,\mu} \in \olive\mathscr{F}_1\EEE , \ \lambda_1,\lambda_2 \in   (0.9,1.1) \big\}.
\end{align}
We obtain the following existence result which is borrowed from \cite{FMPS} and \olive that \EEE generalizes Proposition \ref{eq: old main result}.
\begin{proposition}[Existence and uniqueness of minimizer: General case]\label{th: main1}
There  exist $\ell_0 \in \Nz$ and, for each $\ell \ge \ell_0$, an open interval $M^\ell$ only depending on $v_2$, $v_3$, and $\ell$, with $\mu^{\rm us}_\ell \in M^\ell$, such that for all $\mu \in M^\ell$ there is a unique pair of bondlengths $(\lambda^\mu_1,\lambda^\mu_2)$ such that $\mathcal{F}_{\lambda^\mu_1,\lambda^\mu_2,\mu}$ is a solution of the problem \eqref{min2}.
\end{proposition}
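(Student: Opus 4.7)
The plan is to reduce \eqref{min2} to an interior minimization of a smooth two-variable function and then apply the implicit function theorem at the unstretched reference configuration, finally upgrading the local result to uniqueness on the full square by a compactness/continuity argument.

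First I would use \eqref{eq: basic constraints} and \eqref{alphars} to parametrize the family $\mathcal{F}_{\lambda_1,\lambda_2,\mu}$ by $(\lambda_1,\lambda_2)$ alone with $\mu$ fixed: from $\sigma=(\mu-2\lambda_1)/2$ and $\sin^2\alpha = 1-(\sigma/\lambda_2)^2$ one gets, for $\alpha$ close to $2\pi/3$, the smooth implicit relation
\[
\cos\alpha \,=\, \frac{2\lambda_1-\mu}{2\lambda_2},
\]
while $\beta=\beta(\alpha,\gamma_\ell)$ is given by \eqref{betaz}. Problem \eqref{min2} then amounts to minimizing on the open square $(0.9,1.1)^2$ the smooth function
\[
\phi_\mu(\lambda_1,\lambda_2) \,:=\, \tfrac12 v_2(\lambda_1) + v_2(\lambda_2) + 2v_3(\alpha) + v_3(\beta(\alpha,\gamma_\ell)).
\]

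At the reference point $\mu=\mu_\ell^{\rm us}$, $(\lambda_1,\lambda_2)=(1,1)$, the implicit relation yields $\alpha=\alpha_\ell^{\rm us}$. Since $v_2'(1)=0$ and, by Proposition~\ref{eq: old main result}, the angular part $G(\alpha):=2v_3(\alpha)+v_3(\beta(\alpha,\gamma_\ell))$ satisfies $G'(\alpha_\ell^{\rm us})=0$, both partial derivatives of $\phi_{\mu_\ell^{\rm us}}$ vanish at $(1,1)$. A direct computation gives the Hessian decomposition
\[
D^2\phi_{\mu_\ell^{\rm us}}(1,1) \,=\, \mathrm{diag}\!\left(\tfrac12 v_2''(1),\, v_2''(1)\right) + G''(\alpha_\ell^{\rm us})\, w\otimes w,
\]
where $w=\nabla_{(\lambda_1,\lambda_2)}\alpha|_{(1,1)}$ is explicit in $\alpha_\ell^{\rm us}$; the rank-one piece is non-negative by the minimality of $\alpha_\ell^{\rm us}$, the diagonal piece is strictly positive since $v_2''(1)>0$, and hence the Hessian is positive definite. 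I would then apply the implicit function theorem to the map $\Phi(\lambda_1,\lambda_2,\mu):=\nabla_{(\lambda_1,\lambda_2)}\phi_\mu(\lambda_1,\lambda_2)$ at $(1,1,\mu_\ell^{\rm us})$, whose partial Jacobian in $(\lambda_1,\lambda_2)$ coincides with the Hessian just analyzed and is therefore invertible. This produces an open interval $M^\ell$ around $\mu_\ell^{\rm us}$ and a smooth branch $\mu\mapsto(\lambda_1^\mu,\lambda_2^\mu)$ of critical points, with $(\lambda_1^{\mu_\ell^{\rm us}},\lambda_2^{\mu_\ell^{\rm us}})=(1,1)$; continuity of $D^2\phi_\mu$ keeps the Hessian positive definite in a neighborhood, so, after possibly shrinking $M^\ell$, each $(\lambda_1^\mu,\lambda_2^\mu)$ is a strict local minimizer of $\phi_\mu$.

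The hard part will be upgrading this local uniqueness to uniqueness of the minimizer over the whole open square. My plan is to first establish that at $\mu=\mu_\ell^{\rm us}$ the point $(1,1)$ is the unique global minimizer of $\phi_{\mu_\ell^{\rm us}}$ on $(0.9,1.1)^2$ through the splitting
\[
\phi_{\mu_\ell^{\rm us}}(\lambda_1,\lambda_2) - \phi_{\mu_\ell^{\rm us}}(1,1) \,=\, \tfrac12(v_2(\lambda_1)-v_2(1)) + (v_2(\lambda_2)-v_2(1)) + \bigl(G(\alpha)-G(\alpha_\ell^{\rm us})\bigr):
\]
the first two terms are non-negative because $v_2$ attains its unique minimum at $1$, and for $\ell\geq\ell_0$ large enough the range of $\alpha$ induced by $(\lambda_1,\lambda_2)\in(0.9,1.1)^2$ stays inside the convexity interval $A$ of Proposition~\ref{eq: old main result}, making the last term non-negative as well, with joint equality only at $(1,1)$. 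A compactness-continuity argument then provides existence of a global minimizer of $\phi_\mu$ on the compact closure $[0.9,1.1]^2$, forces it into the interior (since at $\mu=\mu_\ell^{\rm us}$ the boundary is uniformly off-minimum, a condition stable under small perturbation of $\mu$), and shows that along any sequence $\mu_n\to\mu_\ell^{\rm us}$ the minimizers converge to $(1,1)$. For $\mu$ close enough to $\mu_\ell^{\rm us}$ they therefore lie in the IFT neighborhood and must coincide with $(\lambda_1^\mu,\lambda_2^\mu)$. Controlling the $\alpha$-range to remain inside $A$ throughout the box (rather than merely near $(1,1)$) is the delicate step, and is the reason why $M^\ell$ and $\ell_0$ have to be selected in terms of $v_2$, $v_3$ and $\ell$.
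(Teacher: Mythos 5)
Your overall architecture --- reduce \eqref{min2} to minimizing the smooth two-variable function $\phi_\mu$ on the square, obtain a branch of nondegenerate critical points by the implicit function theorem at $(1,1,\mu_\ell^{\rm us})$, and upgrade to global uniqueness by splitting the energy into bond and angle contributions --- is sound, and it is a more direct route than the paper's. The paper obtains the proposition as a corollary of Proposition \ref{th: mainenergy}: one works with the three-variable symmetric cell energy $E^{\rm sym}_{\mu,\gamma_1,\gamma_2}(\lambda,\alpha_1,\alpha_2)$, checks positivity of its Hessian at the planar reference $(\lambda,\alpha_1,\alpha_2)=(1,2\pi/3,2\pi/3)$ with $\mu=3$, $\gamma_1=\gamma_2=\pi$, extends by continuity, excludes minimizers away from that neighborhood by an energy-level argument (off a small box the infimum of $E^{\rm sym}$ stays above $-3$ while the minimum tends to $-3$ as $\ell\to\infty$), and then transfers uniqueness to \eqref{min2} through the identity \eqref{lasty}. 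Your local analysis (vanishing gradient at $(1,1)$ since $v_2'(1)=0$ and $G'(\alpha^{\rm us}_\ell)=0$; Hessian equal to ${\rm diag}(\tfrac12 v_2''(1),v_2''(1))$ plus a nonnegative rank-one term; IFT) is correct.

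However, the step you yourself flag as delicate fails as stated. The range of $\alpha$ over the whole box is governed by $\cos\alpha=(2\lambda_1-\mu)/(2\lambda_2)$ with $(\lambda_1,\lambda_2)\in(0.9,1.1)^2$ and $\mu$ near $3$, so $\cos\alpha$ sweeps essentially all of $(-2/3,\,-4/11)$: an interval of angles of width roughly $20^\circ$ around $2\pi/3$ that does \emph{not} shrink as $\ell\to\infty$. Since the interval $A$ of Proposition \ref{eq: old main result} depends only on $v_3$ and may be much smaller, no choice of $\ell_0$ forces the induced $\alpha$ into $A$, and you cannot invoke convexity of $G$ across the whole box. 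The inequality $G(\alpha)\ge G(\alpha^{\rm us}_\ell)$ that you need is nevertheless true, for a different reason, and the repair is a dichotomy: if $\alpha\in A$, use the minimality of $\alpha^{\rm us}_\ell$ on $A$; if $\alpha\notin A$, then $\alpha$ is at a fixed positive distance from both $2\pi/3$ and $4\pi/3$, so $G(\alpha)\ge 2v_3(\alpha)\ge c>0$ with $c$ depending only on $v_3$, whereas $G(\alpha^{\rm us}_\ell)={\rm O}(\ell^{-4})$ because $\alpha^{\rm us}_\ell$ and $\beta(\alpha^{\rm us}_\ell,\gamma_\ell)$ lie within ${\rm O}(\ell^{-2})$ of $2\pi/3$ and $v_3$ vanishes to second order there; hence $G(\alpha)>G(\alpha^{\rm us}_\ell)$ for $\ell\ge\ell_0$. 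With this substitution the remainder of your compactness/continuity argument goes through.
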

In the following the minimizer is sometimes denoted for simplicity by $\mathcal{F}_\mu^*$. Note
that we have  $\mathcal{F}_{\mu^{\rm us}_\ell}^* =
\mathcal{F}_{1,1,\mu^{\rm us}_\ell}$  by Proposition \ref{eq: old main
  result}. 
 Our aim is to investigate the local minimality of $\mathcal{F}_\mu^*$. To this end, we consider again \emph{general} small perturbations $\tilde{\mathcal{F}}$ of $\mathcal{F}_\mu^*$ with the same bond graph, this time with prescribed stretching. Again, \EEE each atom keeps three and only three bonds, and we can identify the three neighboring atoms of the perturbed configurations with the ones for the configuration $\mathcal{F}_\mu^*$. By $F^\mu_n = \lbrace x^\mu_1,\ldots, x^\mu_n \rbrace$ we  denote the $n$-cell of $\mathcal{F}_\mu^*$ so that $\mathcal{F}_\mu^* = (F_n^\mu, L^\mu_m)$ with $L^\mu_m$ as defined in \eqref{zigzagperiod} for $m \in \Nz$ with $n = 4m\ell$. We define \emph{small perturbations} $\mathscr{P}_\eta(\mu)$ of $\mathcal{F}_\mu^*$ by 
\begin{align}\label{eq: bc}
\begin{split}
\mathscr{P}_\eta(\mu) = \lbrace \tilde{\mathcal{F}} = (F_n,L^\mu_m)| \ F_n := \lbrace x_1,\ldots,x_n \rbrace \ \text{ with } |x_i - x_i^\mu| \le \eta \rbrace,
\end{split}
\end{align}
where $L^\mu_m$ is defined by \eqref{zigzagperiod}.
%The parameter $\eta>0$ will always be chosen sufficiently small such
%that the topology of the bond graph remains invariant. 
  In particular, by choosing  $\eta$ small enough (depending on $v_2$, $v_3$ and $\ell$), the first neighboring relations among atoms do not change and still only first neighbors contribute to the \olive energy. 
 Moreover, \EEE we recall
$E(\tilde{\mathcal{F}}) = E(F_n, L^\mu_m)$. 

Starting from $\mathcal F_{1,1,\mu^{\rm us}_\ell}$, we are imposing
tensile   stress on \ZZZ a \EEE nanotube  by simply modifying $\mu$.   For $\mu>\mu_\ell^{\rm us}$, we are stretching the structure with an axial traction. \ZZZ Note that we consider \EEE perturbations which \ZZZ in general \EEE preserve only the parameter $L=L_m^\mu$.
This action on the structure is very general and includes for instance imposed Dirichlet boundary
conditions, where only the first coordinate of the boundary atoms is prescribed.  
Therefore, we are in the setting of a uniaxial strain experiment where we wish to check the elastic behavior of the structure by validating the Cauchy-Born rule.
Notice that prescribing a value $L^\mu_m$ (with the traction constraint $L^\mu_m>m\mu_\ell^{\rm us}$) can be seen as prescribing the macroscopic length of the $n$-points sample $F_n$. 
Another example of boundary displacement that is compatible with our definition of admissible perturbations of $\mathcal F^*_\mu$
 is the following: \olive All  \EEE atoms that lie in the leftmost section of the $n$-sample of $\mathcal F^*_\mu$ (say the $\ell$ atoms $x$ such that $x\cdot\mathbf e_1$=0) keep lying on the same plane. In this case, the cell length $L^\mu_m$ is the distance between such plane and the parallel one at distance $L^\mu_m$.
\EEE

We now state  the main
result.

\begin{theorem}[Local stability of minimizers]\label{th: main3}
There  exist $\ell_0 \in \Nz$ and for each $\ell \ge \ell_0$ some $\mu^{\rm crit}_\ell > \mu^{\rm us}_\ell$ and $\eta_\ell >0$ only depending on $v_2$, $v_3$, and $\ell$  such that  for all $\ell \ge \ell_0$ and  for all  $\mu \in [\mu_\ell^{\rm us},\mu_\ell^{\rm crit}]$   we have
$E(\tilde{\mathcal{F}})>E(\mathcal{F}_\mu^*) $
for any nontrivial perturbation $\tilde{\mathcal{F}} \in \mathscr{P}_{\eta_\ell}(\mu)$  of the configuration $\mathcal{F}_\mu^*$.
\end{theorem}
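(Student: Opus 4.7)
The plan is to expand the energy $E$ to second order about $\mathcal{F}_\mu^*$ and to exhibit strict positive definiteness of the resulting quadratic form on the space of admissible perturbations, the discussion in Subsection~\ref{pointsection} already indicating that the naive point-by-point summation fails in the presence of an \emph{angle defect}. To this end, I would first parametrize any $\tilde{\mathcal{F}} \in \mathscr{P}_\eta(\mu)$ by the geometric variables introduced in Subsection~\ref{pointsection}: at each atom $x_i$ of the $n$-sample, the three bond lengths $\bar\lambda_j^i$ ($j=1,2,3$), the two bond angles $\alpha_1^i,\alpha_2^i$ adjacent to the axial-type bond, and the nonplanarity angle $\gamma^i$, with the third bond angle determined by $\tilde\beta(\alpha_1^i,\alpha_2^i,\gamma^i)$. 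These data are subject to closure constraints imposed by the cylindrical topology of $\tilde{\mathcal{F}}$ and by the prescribed length $L_m^\mu$, and they determine $\tilde{\mathcal{F}}$ up to isometry.

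\textbf{The no-defect regime.} Writing $E(\tilde{\mathcal{F}})$ as in \eqref{summation1}, I would expand each term around the optimal values $\lambda_1^\mu,\lambda_2^\mu$ and around $(\alpha_{\gamma_\ell},\alpha_{\gamma_\ell},\gamma_\ell)$ realized at $\mathcal{F}_\mu^*$; the stationarity furnished by Proposition~\ref{th: main1} annihilates all linear terms in the free bond-length and angle variables. Strict convexity of $v_2$ at $\lambda_j^\mu$ combined with the Hessian bound from Proposition~\ref{etilde} gives a positive-definite quadratic form on these variables. When the ring-averaged $\gamma^i$ does not exceed $\gamma_\ell$, one reproduces the chain of inequalities \eqref{oldproof}: strict monotonicity of $\gamma \mapsto \tilde E(\alpha_\gamma,\alpha_\gamma,\gamma)$ yields $E(\tilde{\mathcal{F}}) \geq E(\mathcal{F}_\mu^*)$, while the bond-length term (strictly positive away from $\lambda_j^\mu$) together with the strict convexity of $\tilde E$ in $(\alpha_1,\alpha_2)$ promotes this to a strict inequality whenever $\tilde{\mathcal{F}} \neq \mathcal{F}_\mu^*$.

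\textbf{Removing the angle defect.} In the complementary regime $\gamma_{\mathrm{mean}} > \gamma_\ell$ the monotonicity argument fails, and one must adopt an alternative summation that encodes the global geometry. The key observation is that the $\gamma^i$ are not independent: closing a perturbed transverse ring into a loop compatible with the fixed period $L_m^\mu$ and with cylindrical continuation forces, to first order, a linear identity expressing $\sum_{i\in \mathrm{ring}} (\gamma^i - \gamma_\ell)$ as a specific linear combination of the bond-length and in-ring angle deviations $\bar\lambda_j^i - \lambda_j^\mu$ and $\alpha_j^i - \alpha_j^*$ along the same ring. Summing this identity across rings and substituting into the second-order expansion absorbs the angle defect into terms already controlled by the positive-definite contributions coming from $v_2''(\lambda_j^\mu)>0$ and from the Hessian of $\tilde E$, upgrading the lower bound of the previous step to a strictly positive definite quadratic form on the entire admissible subspace.

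\textbf{Conclusion and main obstacle.} Finally, I would choose $\eta_\ell$ small enough that the cubic remainders of the Taylor expansion are dominated by the quadratic lower bound, and $\mu_\ell^{\rm crit} > \mu_\ell^{\rm us}$ as the largest stretching for which the invoked strict convexity/monotonicity of $v_2$ at $\lambda_j^\mu$ and of $\tilde E$ on a neighborhood of $(\alpha_{\gamma_\ell},\alpha_{\gamma_\ell},\gamma_\ell)$ persists uniformly along the family $\{\mathcal{F}_\mu^*\}$. Then $E(\tilde{\mathcal{F}}) > E(\mathcal{F}_\mu^*)$ for every nontrivial $\tilde{\mathcal{F}} \in \mathscr{P}_{\eta_\ell}(\mu)$ with $\mu \in [\mu_\ell^{\rm us},\mu_\ell^{\rm crit}]$. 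The hard step is the angle-defect argument: one must differentiate the global rolling-up constraint with enough precision to produce exactly the compensating linear term that neutralizes the defect, and do so uniformly in $\ell \geq \ell_0$ and in the stretching parameter, so that neither the many nearly orthogonal directions in the closure constraint nor the mild distortion induced by axial traction destroy the positive definiteness inherited from $v_2$ and $v_3$.
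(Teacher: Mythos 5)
There is a genuine gap, and it sits exactly where you locate ``the hard step.'' Your plan rests on two premises that do not hold. First, you claim that stationarity from Proposition~\ref{th: main1} annihilates all linear terms in the expansion. It does not: Proposition~\ref{th: main1} gives optimality only within the two-parameter family $(\lambda_1,\lambda_2)$ at fixed $\mu$, so $\mathcal{F}_\mu^*$ is \emph{not} a critical point with respect to the nonplanarity variables $\gamma^i$. Indeed, by Property~3 of Proposition~\ref{th: mainenergy} the map $\gamma\mapsto E_{\rm red}(\mu,\gamma,\gamma)$ has a strictly negative derivative of order $\ell^{-3}$, so there is a surviving linear term pushing the energy \emph{down} as the $\gamma^i$ increase. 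A positive-definite-Hessian argument on the unconstrained variables therefore cannot close. Second, the closure constraint you invoke is not a first-order linear identity. The correct geometric fact (Theorem~\ref{lemma: sum}) is a one-sided, second-order inequality: the ring sum $\sum_i\bar\theta(\boldsymbol{x}_{i,j,k})$ can exceed its reference value $4m(2\ell-2)\pi$ only by an amount controlled by the quadratic symmetry defect $\sum\Delta(\boldsymbol{x}_{i,j,k})$, because the planar polygon of unit normals maximizes the interior-angle sum and the deviation from planarity enters only at second order. To first order the constraint gives you nothing in the dangerous direction, so there is no linear combination of bond-length and angle deviations into which the angle defect can be ``substituted and absorbed.''

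What actually closes the argument in the paper is a mechanism absent from your proposal: the cell-by-cell summation together with the symmetrization of cells and the symmetry defect $\Delta$ of \eqref{delta}. Theorem~\ref{th: Ered} shows that each cell energy dominates the reduced energy of its symmetrized version plus a coercive gain $C\ell^{-2}\Delta$, while Theorem~\ref{lemma: sum} bounds the angle defect by $c\sum\Delta$; combining these with the $O(\ell^{-3})$ bound on $\partial_\gamma E_{\rm red}$ yields the net coefficient $C\ell^{-2}-C'\ell^{-3}>0$ in \eqref{mainproof2} for $\ell$ large, and a separate monotonicity step in $\mu$ (Property~4 of Proposition~\ref{th: mainenergy}, using $\bar\mu\ge\mu\ge\mu_\ell^{\rm us}$) finishes the proof. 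Your first two steps (the regime $\gamma_{\rm mean}\le\gamma_\ell$ via \eqref{oldproof}, and the choice of $\eta_\ell$, $\mu_\ell^{\rm crit}$ by uniformity) are consistent with the paper, but without the symmetry-defect device, or a quantitative substitute with the correct $\ell$-scaling, the angle-defect regime remains unproved.
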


 Under prescribed and small stretchings \PPP (i.e., the value of $L_m^\mu$ is prescribed), \EEE
 \PPP we are proving that  there exists a periodic strict-local minimizer  $\mathcal{F}_\mu^*$ that  belongs \EEE  to the
family $\olive\mathscr{F}_1\EEE $. %\PPP In other words, given $\mu>\mu^{\rm us}$, the $\mu$-periodic configuration $\mathcal{F}_\mu^*$  is a local minimizer among 
%configurations subject to the same macroscopic stretching, i.e., the atoms follow the
%macroscopic deformation. 
This can be seen as a validation of the
Cauchy-Born rule  in  this specific setting. \EEE
The result applies in particular to the case $\mu=\mu^{\rm us}_\ell$, which implies 
the validity of the local minimality result that we have previously discussed in Section \ref{pointsection} as a particular case. \ZZZ More specifically, we recall that the proof provided in Section \ref{pointsection} was incomplete, but is now completed by means of the main result of our work. \EEE We also mention that fracture has to be expected for $\mu$ above the critical value $\mu_\ell^{\rm crit}$, i.e., if traction  is too strong, see \cite{FMPS}.

\section{Proof of the main result}\label{sec: main proof}
This section provides an overview of the main lines of the proof of Theorem \ref{th: main3}, following the arguments in \cite{FMPS}. The first point is a different energy summation with respect to the one we adopted in Section \ref{pointsection}.

\subsection{\olive Cell-by-cell \EEE  summation of the energy}
For the configuration $\mathcal F^*_\mu$ from Proposition \ref{th: main1}, the $n$-points sample $F_n$ is made by $4m$ sections that are orthogonal to the \ZZZ axis of the cylinder \EEE (each with $\ell$ atoms). We consider a subdivision of $F_n$ in {\it basic cells} \olive (or, in the following, simply  cells). \EEE

 A basic cell is made of eight points which correspond to a hexagon plus its two first neighboring atoms along the parallel to the axis, see Figure  \ref{cell}.  We denote by $\boldsymbol{x}=(x_1,\ldots, x_8)$ the generic cell, with the ordering from Figure \ref{cell}, where $x_2-x_1$ is parallel to the axis vector $\mathbf e_1$. Notice that the whole $n$-points sample contains exactly $2m\ell$ cells, that are spanned by letting the indices in the definition of $\mathscr F(\lambda_1,\lambda_2,\mu)$ vary as $i=1,\ldots, \ell$, $j=1,\ldots m$, $k=0,1$ (while the index $l\in\{0,1\}$ varies within the same cell).

 We note that about the ends of the $n$-points sample, we always consider the modulo $L^\mu_m$ repetition of points in the $\mathbf e_1$ direction, \ZZZ i.e., we leave \EEE  the interior cells and the boundary cells undistinguished. This convention is made in all the next definitions.
\EEE
 
We define \olive the {\it cell center} and the {\it dual center} by
\begin{align}\label{eq: dual center}
 z:=\frac{x_1+x_2}{2} \qquad\mbox{and } \qquad z^{\rm dual}=\frac{x_2+x_8}2,
 \end{align}
 respectively, \EEE see Figure \ref{cell}.
A small perturbation $\tilde {\mathcal F}$ in $\mathscr P_{\eta_\ell}(\mu)$ of $\mathcal F^*_\mu$ does not change first-neighbor relations. Therefore, we can define basic cells also in $\tilde {\mathcal F}$ by the identification of the points of $\tilde {\mathcal F}$ with the ones of $\mathcal F^*_\mu$. Given the general configuration \olive $\tilde {\mathcal F}$ and any of its cells $\boldsymbol x$, the 
 \begin{figure}[htp]
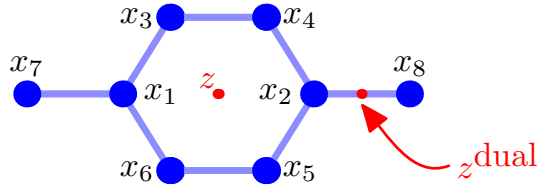

\begin{center}
\pgfdeclareimage[width=0.6\textwidth]{cell}{cell_new.pdf} 
    \pgfuseimage{cell} 
\caption{ Notation for the points and the centers in the basic cell.}
\label{cell}
\end{center}
\end{figure}
%Let be given a center $z_{i,j,k}$ and number the atoms of the corresponding hexagon by $x_1 = x_{i,k}^{j,0}$, $ x_2 = x_{i,k}^{j,1}$ and the remaining  clockwisely  by  $x_3,x_4,x_5,x_6$ as indicated in  Figure \ref{cell}, {\BBB such that $x_3$ is consecutive to $x_1$,} see also \eqref{kink}  below. Additionally, the atoms bonded to $x_1$ and $x_2$, respectively, which are not contained in the hexagon, are denoted by  $x_7$ and  $x_8$. Note that $z_{i,j-1,k}^{\rm dual} = (x_7 + x_1)/2$ and $z_{i,j,k}^{\rm dual} = (x_2 + x_8)/2$. 
bond lengths $b_i$, $i=1,\ldots, 8$, and the \EEE bond angles $\varphi_i$, $i=1,\ldots 10$, are defined and labelled as shown in Figure \ref{cellangles}.
%For $i=1,\ldots,6$ we define the bondlengths $b_i$ as indicated in  Figure \ref{cellangles}    and $b_7 = |x_1 - x_7|$,  $b_8 = |x_2 - x_8|$, where
%$$2|z_{i,j-1,k}^{\rm dual} - x_1| = b_7, \ \ \ \  2|z_{i,j,k}^{\rm dual} - x_2| = b_8.$$
% By $\varphi_i$ we denote the interior angle of the hexagon at
% $x_i$. By $\varphi_7,\varphi_8$ we denote the remaining two angles at
% $x_1$ and by $\varphi_9,\varphi_{10}$ we denote the remaining two
% angles at $x_2$, see again  Figure \ref{cellangles}.    
 \begin{figure}[htp]
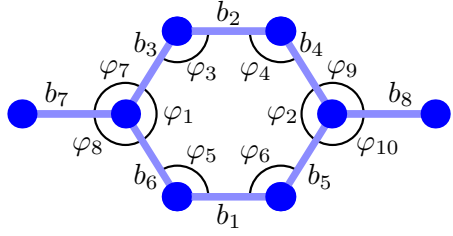

\begin{center}
\pgfdeclareimage[width=0.4\textwidth]{cellangles}{cellangles.pdf}
    \pgfuseimage{cellangles}
\caption{ Notation for the bond lengths and angles in the basic cell.}
\label{cellangles}
\end{center}
\end{figure}

The \emph{cell energy} is the energy contribution of a single  cell $\boldsymbol x$ and it is defined %\PPP as the contribution to the total energy $E$ of a single cell which is identified with its center $z_{i,j,k}$ \EEE 
by \begin{equation}
\begin{aligned} 
E_{{\rm cell}}(\boldsymbol x) & = \frac{1}{4} \big(v_2(b_1) + v_2(b_2) \big) + \frac{1}{2}\sum^6_{\PPP h\EEE=3} v_2(b_{\PPP h\EEE})    + \frac{1}{4} \big( v_2(b_7) + v_2(b_8) \big)% \nonumber
\\ 
&  \qquad + v_3(\varphi_1) + v_3(\varphi_2) + \frac{1}{2}\sum^6_{\PPP h\EEE=3}
v_3(\varphi_{\PPP h\EEE})  +\frac{1}{2} \sum^{10}_{\PPP h\EEE=7}
v_3(\varphi_{\PPP h\EEE}). \label{eq: cell} 
\end{aligned}\end{equation}
The coefficients in the definition of $E_{\rm cell}$ are taken by considering that each bond that is not (approximately) parallel to $\mathbf e_1$ is contained exactly in two cells and each of the other bonds  is contained in four cells,  and similarly for bond angles. \ZZZ The \EEE computation of the energy of the cell $\boldsymbol x$ requires the knowledge of the mapping $T:\mathbb R^{24}\to\mathbb R^{18}$ that associates to $\boldsymbol x=(x_1,\ldots, x_8)\in\mathbb R^{24}$ the corresponding bond lengths $b_i$ and bond angles $\varphi_i$ that appear in the  above right hand side.

%\PPP Notice that the cell energy is a function depending on the bond lengths and angles in the cell. However, as we identify each cell with its center $z_{i,j,k}$, for simplicity we use the notation $E_{{\rm cell}}=E_{{\rm cell}}(z_{i,j,k})$. Furthermore, also for notational convenience we do not put    indices $i,j,k$ on bond lengths and angles.  \EEE
%To derive convexity properties of $E_{{\rm cell}}$ it is convenient to take also the contribution of the angles $\varphi_7, \ldots, \varphi_{10}$ into account.  Observe that

When considering the whole $n$-points sample, we index \ZZZ its generic cells \EEE as $\boldsymbol x_{i,j,k}$, $i=1,\ldots \ell$, $j=1,\ldots,m$, $k=0,1$. Therefore, the \olive cell-by-cell \EEE  summation formula  for the energy reads
 
\begin{align}\label{eq: sumenergy}
E(\tilde{\mathcal{F}}) = \sum_{i=1}^\ell \sum_{j=1}^m \sum_{k=0,1} E_{{\rm cell}}(\boldsymbol x_{i,j,k}). 
\end{align}

 \PPP
As we have seen in Section \ref{pointsection} (at least for $\mu=\mu_\ell^{\rm us}$), the angle defect is the main obstacle to the conclusion of the proof of Theorem \ref{th: main3}.
Since the strategy is now to sum the energy cell by cell, \olive we  introduce \EEE the notion of \emph{nonplanarity angle} of a cell, which generalizes the nonplanarity angle from Section \ref{pointsection}.

For a cell $\boldsymbol{x}=(x_1,\ldots,x_8)$ with center $z$ and dual center $z^{\rm dual}$, we denote by $\theta_l(z)$ the angle between the plane through $x_1,x_3,x_4$ and the one through $x_1,x_5,x_6$ \olive (following the labelling of Figure \ref{cell}). \EEE Similarly, we define $\theta_r(z)$ as the angle between the plane through $x_2,x_3,x_4$ and the one through $x_2,x_5,x_6$. Moreover, $\theta_l(z^{\rm dual})$ is the angle between the plane through ${x_2,x_4,x_8}$ and the one through $x_2,x_5,x_8$. Finally, $\theta_r(z^{\rm dual})$ is the angle between the plane through  $x_2,x_8,x_*$ and the one through $x_2,x_8,x_{**}$ (here, $x_*$ and $x_{**}$ are the other two first neighboring atoms of $x_8$: they belong to another cell). For the sake of definiteness, all these angles are in $[0,\pi)$.

The {\it nonplanarity angle}  of a cell $\boldsymbol x$ with center $z$ and dual center $z^{\rm dual}$ is defined \begin{equation}\label{nonplanarity2}\bar\theta(\boldsymbol{x}):=\frac14\left(\theta_l(z)+\theta_r(z)+\theta_l(z^{\rm dual})+\theta_r(z^{\rm dual})\right).\end{equation}

\noindent
\subsection{Symmetrization of cells} 
Let us now consider the basic cell of the configuration $\mathcal F^*_\mu$ that is defined by Proposition \ref{th: main1} (of course, all the basic cells of such configuration are equal).
We denote it by $\boldsymbol x^{\rm kink}$, where the terminology is hinting to the fact that it is kinked along the \ZZZ diagonal of the  hexagon \EEE whose direction is $\mathbf e_1$. Indeed, the eight points of $\boldsymbol x^{\rm kink}$ are contained in two planes.  
We can give the precise position of the eight points by fixing a reference orthogonal system $\mathbf e_1,\mathbf e_2,\mathbf e_3$    with  the cell center coinciding with the origin. We let $\mathbf e_1$ be axis direction as usual, we let 
 $\mathbf e_2$ be  the direction of $x_3-x_6$,  and we let  $\mathbf e_3=\mathbf e_1\wedge \mathbf e_2$.
 The exact positions of the points in $\boldsymbol x^{\rm kink}$ are therefore
  \begin{equation*}\label{kink} 
 \begin{aligned}
&x_1^{\rm kink} = (-1/2-\sigma^{\rm us},0,0),\quad   x^{\rm kink}_2 = (1/2 + \sigma^{\rm us},0,0), \\
&  x_3^{\rm kink} =(-1/2,\sin\alpha^{\rm us}_\ell \sin(\gamma_\ell/2),\sin\alpha^{\rm us}_\ell \cos(\gamma_\ell/2)),\quad
  x_4^{\rm kink} =%(1/2,\sin\alpha^{\rm us}_\ell \sin(\gamma_\ell/2),\sin\alpha^{\rm us}_\ell \cos(\gamma_\ell/2))=
  x_3^{\rm kink}+(1,0,0),  \\
&  x_5^{\rm kink} =(1/2,-\sin\alpha^{\rm us}_\ell \sin(\gamma_\ell/2),\sin\alpha^{\rm us}_\ell \cos(\gamma_\ell/2)), \quad
 x_6^{\rm kink} %=(-1/2,-\sin\alpha^{\rm us}_\ell \sin(\gamma_\ell/2),\sin\alpha^{\rm us}_\ell \cos(\gamma_\ell/2))
 =x_5^{\rm kink}-(1,0,0), 
 \\& x_7^{\rm kink} = (-3/2 - \sigma^{\rm us},0,0),\quad  x_8^{\rm kink} = (3/2 + \sigma^{\rm us},0,0),
\end{aligned}
 \end{equation*}
\ZZZ where $\sigma^{\rm us}$ corresponds to the parameter in \eqref{eq: basic constraints}, $\gamma_\ell$ is defined in \eqref{gamma}, and $\alpha^{\rm us}_\ell$ as given in Proposition \ref{eq: old main result}. \EEE   For a generic perturbation $\boldsymbol x$ of $\boldsymbol x^{\rm kink}$ (resulting from a perturbation $\widetilde{\mathcal F}\in\mathscr P_\eta(\ZZZ \mu \EEE )$), we can consider the same reference system,
  %If $\boldsymbol{x}\in \mathbb{R}^{3\times 8}$ denotes a generic cell,
   and  by adding a rigid motion we may reduce \ZZZ without restriction to the following situation: \EEE  the second and third components of  $(x_1+x_7)/2$, $(x_2+x_8)/2$ (which are the dual centers of $\boldsymbol x$ and of an adjacent cell) are equal to zero \EEE and the points $x_4$, $x_5$ lie in a plane that is parallel to the one generated by  \ZZZ $\mathbf  e_1$ and $\mathbf e_2$. \EEE  
For  $y = (y^1, y^2,y^3) \in \Rz^3$ we \olive consider \EEE $r_1 (y) := (-y^1,y^2,y^3)$ and $r_2 (y) := (y^1,-y^2,y^3)$. For the generic cell $\boldsymbol {x}= (x_1,\ldots,x_8)$, we define the reflections
\begin{equation*}\label{reflexion}
\begin{aligned}
S_1(\boldsymbol{x})& = ( r_2(x_1) \, | \,  r_2(x_2)   \, | \, r_2(x_6)  \, | \,  r_2(x_5) \, | \, r_2(x_4) \, | \, r_2(x_3) \, | \, r_2( x_7) \, | \, r_2(  x_8)),\\ 
S_2(\boldsymbol{x}) &= ( r_1( x_2) \, | \, r_1( x_1) \, | \,  r_1(x_4 )\,  | \,  r_1( x_3)  \, | \,r_1( x_6) \, | \,  r_1(x_5) \, | \, r_1( x_8) \, | \, r_1( x_7)).
\end{aligned}
\end{equation*}
%$S_1$ interchanges the pair of points $(x_3, x_6)$ and $(x_4,x_5)$, and changes the sign of the second components of all points. On the other hand, $S_2$ interchanges the pair of points $(x_1, x_2)$, $(x_3,x_4)$, $(x_5,x_6)$, and  $(x_7,x_8)$, and changes the    sign of the first components of all points. \EEE
%Possibly after a rigid motion we may always assume that, {\BBB with respect to the new reference system}, the second and third components of $x_7$, $x_8$ are zero and the points $x_4$, $x_5$ lie in a plane parallel to $\Rz^2 \times \lbrace 0 \rbrace$. 
\BBB
We  define the {\it reflected perturbations}
\begin{equation*}\label{s1s2}
\boldsymbol{x}_{S_1}:=\boldsymbol{x}_{\rm kink}^\ell+S_1(\boldsymbol{x}-\boldsymbol{x}_{\rm kink}^\ell),\quad
\boldsymbol{x}_{S_2} : = \boldsymbol{x}_{\rm kink}^\ell  + S_2(\boldsymbol{x}- \boldsymbol{x}_{\rm kink}^\ell).
\end{equation*}
% and note that $E_{\rm cell}(\boldsymbol{x}_{S_2}) =  E_{\rm cell}(\boldsymbol{x}')$
 Indeed, since $\boldsymbol{x}$ is  a perturbation of  $\boldsymbol{x}^{\rm kink}$, $\boldsymbol{x}_{S_1}$ \PPP(resp.~$\boldsymbol{x}_{S_2}$) \EEE is the \PPP reflected \EEE perturbation \olive with respect to \EEE the plane generated by ${\mathbf e_1, \mathbf e_3}$ \PPP (resp.~$\mathbf e_2, \mathbf e_3$).  \EEE We note that  $E_{\rm cell}(\boldsymbol{x}_{S_2} ) =  E_{\rm cell}(\boldsymbol{x}_{S_1} ) =  E_{\rm cell}(\boldsymbol{x})$ is a consequence of the symmetry of the configurations.
 \ZZZ We now define the \emph{symmetrization of the cell} \EEE in two steps. \olive We  first perform a symmetrization with respect to the plane generated by ${\mathbf e_1, \mathbf e_3}$, then  a symmetrization with respect to the plane generated by $\mathbf e_2, \mathbf e_3$. \EEE Of course, $\boldsymbol{x}^{\rm kink}$ is itself symmetric \olive with respect to \EEE these two planes. 
  We define \PPP the \emph{symmetrized} perturbations \EEE by 
\begin{equation*}\begin{aligned}\label{reflection2}
 % \begin{align}
    \boldsymbol{x}' &:=  \boldsymbol{x}^{\rm kink} + \frac{1}{2} \Big((\boldsymbol{x}  - \boldsymbol{x}^{\rm kink})   + S_1(\boldsymbol{x}- \boldsymbol{x}^{\rm kink}) \Big),\\
    \mathcal{S}(\boldsymbol{x}): &= \boldsymbol{x}^{\rm kink} +
    \frac{1}{2} \Big((\boldsymbol{x}' - \boldsymbol{x}^{\rm
      kink}) + S_2(\boldsymbol{x}'- \boldsymbol{x}^{\rm
      kink}) \Big).%\label{reflection2-b}
  \end{aligned}
\end{equation*}
We finally define the \emph{symmetry defect} as the following quadratic deviation
\begin{align}\label{delta}
\Delta(\boldsymbol{x}) := |\boldsymbol{x} - \boldsymbol{x}'|^2 + |\boldsymbol{x}' - \mathcal{S}(\boldsymbol{x})|^2.
\end{align}
%\PPP Notice that for notational simplicity in \eqref{delta} we do not put indices $i,j,k$ on $\boldsymbol{x}$, $\boldsymbol{x}'$, and $\mathcal{S}(\boldsymbol{x})$. \EEE
%{\BBB A property that we remark is that    for a basic cell $\boldsymbol{x}$ with center $z_{i,j,k}$ the quantity $|z^{\rm dual}_{i,j,k} - z^{\rm dual}_{i,j-1,k}|$   does not change when passing to $\mathcal{S}(\boldsymbol{x})$ \RRR since the second and third component of $z^{\rm dual}_{i,j,k}, z^{\rm dual}_{i,j-1,k}$ are assumed to be zero. \EEE  }
We notice that the distance among the points $\frac12(x_1+x_7)$ and $\frac12(x_2+x_8)$ from the cell  $\boldsymbol{x}$ does not change in passing to the symmetrized configuration $\mathcal S(\boldsymbol{x})$, i.e.,
\begin{equation*}\label{neu}
\left|{(\mathcal S(\boldsymbol{x})_1+\mathcal S(\boldsymbol{x})_7)}-{(\mathcal S(\boldsymbol{x})_2+\mathcal S(\boldsymbol{x})_8)}\right|
=\left|{(x_1+x_7)}-{(x_2+x_8)}\right|.
\end{equation*}
%This provides a useful relation among dual centers after symmetrization.

 %Below we will see that the difference of the cell energy of $\mathcal{S}(\boldsymbol{x})$ and $\boldsymbol{x}$ can be controlled in terms of $\Delta(z_{i,j,k})$ due to strict convexity of the energy. 

%In Section \ref{sec: angles} we prove the following lemma which provides a linear control for the oscillation of plane angles of a perturbed configuration $\tilde{\mathcal{F}}$  \RRR with respect to those of a configuration in $\olive\mathscr{F}_1\EEE $ \EEE  in terms of the symmetry defect from \eqref{delta}. 
%
%
%\begin{lemma}[Symmetry defect controls angle defect]\label{lemma: sum}
%There is a universal constant $c>0$ such that for $\eta>0$ small enough \RRR for  all \PPP $\tilde{\mathcal{F}}\in \mathscr{P}_\eta(\mu) $ \EEE with $\Delta(z_{i,j,k}) \le \eta$ for all centers $z_{i,j,k}$ \EEE we have
%\begin{align*}
%\sum_{j=1}^m\sum_{i=1}^\ell \sum_{k=0,1}\Big(\theta_l(z_{i,j,k}) + \theta_l(z^{\rm dual}_{i,j,k}) & + \theta_r(z_{i,j,k}) + \theta_r(z^{\rm dual}_{i,j,k})  \Big) \\ & \le    4m(2\ell - 2)\pi + c\sum_{j=1}^m\sum_{i=1}^\ell\sum_{k=0,1} \Delta(z_{i,j,k}).
%\end{align*}
%
%\end{lemma}

The cell energy  from \eqref{eq: cell} is expressed  as a function of the $18$ variables $b_1,\ldots b_8$, $\varphi_1,\ldots,\varphi_{10}$. After taking symmetrization as above,  the number of variables is reduced. In fact, let us consider those cells $\boldsymbol{x}$ that are symmetrized by the above procedure. \ZZZ Then, possibly up to an additional rigid motion, \EEE there are lengths $\lambda,\tilde\mu$ and angles $\alpha_1,\alpha_2,\beta,\gamma_1,\gamma_2$ such that
\[
\begin{aligned}
&\varphi_1 = \varphi_2 = \beta,\quad \varphi_3 = \varphi_4 = \varphi_5 = \varphi_6 = \alpha_1, \quad \varphi_7 = \varphi_8 = \varphi_9=  \varphi_{10} = \alpha_2, \\
&(x_2+x_8)-(x_1+x_7)=2\tilde\mu\mathbf e_1, \quad b_1=b_2=b_7=b_8=\tilde\mu/2+\lambda\cos\alpha_1,\quad b_3=b_4=b_5=b=6=\lambda
\end{aligned}
\]
and such that ${\theta_l}(\ZZZ z \EEE ) = {\theta_r}(\ZZZ z \EEE   ) = \gamma_1$. Moreover, the angle between the plane through $x_7,x_1,x_3$ and the one through $x_7,x_1,x_6$ is $\gamma_2$ and the angle between the plane through \ZZZ $x_2,x_8,x_4$ \EEE and the one through $x_2,x_8,x_5$ is $\gamma_2$, as well.

%We stress that, when working with a perturbed cell, the perturbation is always small, so that the bond lengths and bond angles are always close to the reference values.\EEE
%$\lambda_1,\lambda_2,\lambda_3,\lambda_4,\tilde\mu$ and angles $\beta, \alpha_1,\alpha_2,\gamma_1,\gamma_2$ such that
%
%$(x_2+x_8)-(x_1+x_7)=2\tilde\mu\mathbf e_1$
%
%\begin{equation}\label{sym-assumption} 
%  \begin{aligned}
%    &b_1 = b_2 = \lambda_1, \ \  \ \ b_3 = b_4 = b_5 =  b_6 = \lambda_2, \ \  \ \  b_7 = b_8 = \lambda_3,\\
%    &z^{\rm dual}_{i,j,k} - z^{\rm dual}_{i,j-1,k} = {\BBB \widetilde\mu }\mathbf e_1, \ \ \ \  x_2-x_1 = \lambda_4 \mathbf e_1,\\
%    &\varphi_1 = \varphi_2 = \beta, \ \  \ \ \varphi_3 = \varphi_4 = \varphi_5 = \varphi_6 = \alpha_1, \ \ \ \ \varphi_7 = \varphi_8 = \varphi_9=  \varphi_{10} = \alpha_2, \\
%    &{\theta_l}(z_{i,j,k}) = {\theta_r}(z_{i,j,k}) = \gamma_1, \ \ \ \
%    \ {\theta_l}(z^{\rm dual}_{i,j,k}) = {\theta_r}(z^{\rm
%      dual}_{i,j-1,k}) = \gamma_2
%  \end{aligned}
%\end{equation}
The notation for \olive $\alpha_1,\alpha_2$ and \EEE $\beta$ corresponds indeed to the angle \olive notations \EEE from Section \ref{Fsection}, and  $\beta$ can be expressed by the same simple trigonometric relations, \olive namely \EEE
\begin{align*}%\label{eq: constraint2}
\beta= \beta(\alpha_1,\gamma_1) =  2\arcsin\left(\sin\alpha_1\sin\frac{\gamma_1}{2}\right) = \beta(\alpha_2,\gamma_2) =   2\arcsin\left(\sin\alpha_2\sin\frac{\gamma_2}{2}\right).
\end{align*}
%which are easily seen to hold along with $\lambda_4   = \lambda_1 - 2\lambda_2\cos\alpha_1.$
%Therefore, in a symmetrized cell we have 

%with $\lambda_1,\lambda_2,\lambda_3 \in (0.9,1.1)$, $\lambda_4 \in (0.9,3.3)$, ${\BBB \widetilde\mu} \in (2.6,3.1)$\PPP, \EEE  $\alpha_1,\alpha_2,\beta \in (\arccos(-0.4),\arccos(-0.6))$, $\gamma_1,\gamma_2 \in [\frac{3}{4}\pi,\pi]$.  Note that ${\theta_r}(z^{\rm dual}_{i,j-1,k})= \theta(x_1)$ and ${\theta_l}(z^{\rm dual}_{i,j,k})= \theta(x_2)$ with the angles introduced in \eqref{eq: thetaangle}. 
{\BBB We notice that  $\tilde \mu=\mu$ for a basic cell of a nanotube in $\olive\mathscr{F}_1\EEE $.}   Since we are always taking a small perturbation of an optimal configuration  $\mathcal F^*_\mu=\mathcal{F}_{\lambda_1^\mu,\lambda_2^\mu,\mu}$, the values of all the parameters are close to the corresponding values of $\mathcal{F}_{\lambda_1^\mu,\lambda_2^\mu,\mu}$.
%{\BBB Under \eqref{sym-assumption}},  arguing along the lines  of Proposition \ref{betaproperties},   we obtain 
%
% By elementary trigonometry, cf.  Figure \ref{reducedenergy},   we also get
%\begin{align}\label{lambda4}
%\lambda_4   = \lambda_1 - 2\lambda_2\cos\alpha_1.
%\end{align}
For a highly symmetric cell as above, the cell energy \eqref{eq: cell} is equal to 
\begin{align}\label{symmetric-cell}
\begin{split}
E^{\rm sym}_{\tilde\mu,\ZZZ \gamma_1,\gamma_2 \EEE }(\lambda,\alpha_1,\alpha_2)& \ZZZ := \EEE 2v_2(\lambda)    + \frac{1}{2} {v}_2  \big( \tilde\mu/2 + \lambda\cos\alpha_1 \big) + \frac{1}{2} {v}_2  \big(\tilde\mu/2 + \lambda\cos\alpha_2 \big)  \\
& \ \ \ \  +  2   v_3(\alpha_1)  +  2 v_3(\alpha_2)   +  v_3(\beta(\alpha_1,\gamma_1)) +  v_3(\beta(\alpha_2,\gamma_2)).
\end{split}
\end{align}
Therefore, the energy of highly symmetric cells as above is a function of six the variables $\alpha_1$, $\alpha_2$, $\gamma_1$, $\gamma_2$, $\lambda$, $\tilde\mu$.
We further introduce the {\it reduced energy}, that is obtained by minimizing in the three variables $\alpha_1,\alpha_2,\lambda$, more precisely
  \begin{align*}
E_{\rm red}(\mu,\gamma_1,\gamma_2) &:= \min\lbrace E_{\mu,\gamma_1,\gamma_2}^{{\rm sym}}(\lambda,\alpha_1,\alpha_2)| \ \lambda \in (0.9,1.1), \ \alpha_1,\alpha_2 \in (\arccos(-0.4),\arccos(-0.55)) \rbrace.
\end{align*}
Here, the optimization interval are chosen for the sake of definiteness only.
Since $E_{\mu,\gamma_1,\gamma_2}^{{\rm sym}}$ is symmetric in  $(\alpha_1,\gamma_1)$ and $(\alpha_2,\gamma_2)$, then $E_{\rm red}$ is symmetric in $\gamma_1$ and  $\gamma_2$, \PPP i.e., \EEE $E_{\rm red}(\mu,\gamma_1,\gamma_2)  = E_{\rm red}(\mu,\gamma_2,\gamma_1)$.  
We stress that the parameter $\mu$ in $E_{\rm red}$ plays the role of the stretching parameter, while $\gamma_1,\gamma_2$ are again nonplanarity measures of the cell. We borrow from \cite{FMPS} a general result concerning the properties of $E_{\rm red}$. It can be thought as a generalization of Proposition \ref{etilde} \ZZZ featuring  convexity and monotonicity properties. \EEE 
It is proven after some lengthy explicit computations for which we refer to \cite[Section 6]{FMPS}.
%The following result, which is proved in Section \ref{sec: reduced-energy}, collects the fundamental properties of $E_{\rm red}$.
%\begin{proposition}
%\end{proposition}
\begin{proposition}\label{th: mainenergy}
There exists $\ell_0 \in \Nz$ and for each $\ell \ge \ell_0$ there are open intervals $M^\ell$, $G^\ell$ only depending on $v_2$,  $v_3$ and $\ell$ with $\mu^{\rm us}_\ell \in M^\ell$, $\gamma_\ell \in G^\ell$ \PPP %\emph{(}where we recall that $\mu^{\rm us}_\ell$ and $\gamma_\ell$ were defined in \eqref{eq:muellneu} and \eqref{gamma}, respectively\emph{)} 
\EEE such that the following holds:   
\begin{itemize}
\item[1.] \BBB For each   $(\mu, \gamma_1,\gamma_2) \in M^\ell \times G^\ell \times G^\ell$  there exists a unique triple $(\lambda^\mu, \alpha^\mu_1,\alpha^\mu_2)$ solving the minimization problem which defined $E_{\rm red}$. {\BBB   {Moreover, $\alpha_1^\mu=\alpha_2^\mu$ if $\gamma_1=\gamma_2$.}} \EEE
%\RRR (For simplicity, the dependence of the triple on $\gamma_1,\gamma_2$ is not included in the notation.)\EEE
\item[2.]  $E_{\rm red}$ is strictly convex on  $M^\ell \times G^\ell \times G^\ell$, in particular there is a  constant $c_0'>0$ only depending on $v_2$ and $v_3$  such that
$E_{\rm red}(\mu,\gamma_1,\gamma_2) \ge E_{\rm red}(\mu,\bar{\gamma},\bar{\gamma}) + c_0'\ell^{-2} (\gamma_1 - \gamma_2)^2$, with $\bar{\gamma} = (\gamma_1 + \gamma_2)/2$, for all $\mu \in M^\ell$ and $\gamma_1,\gamma_2 \in G^\ell$.
\item[3.]  For each $\mu \in M^\ell$, the mapping $g(\gamma):= E_{\rm red}(\mu,\gamma,\gamma)$ is decreasing on $G^\ell$ with $|g'(\gamma)| \le C\ell^{-3}$ for all $\gamma \in G^\ell$ for some $C>0$ depending only on  $v_3$.
\item[4.]  The mapping $h(\mu):= E_{\rm red}(\mu,\gamma_\ell,\gamma_\ell)$ is strictly convex on $M^\ell$ with $h''(\mu^{\rm us}_\ell)>0$ and strictly increasing on $M^\ell \cap \lbrace \mu \ge \mu^{\rm us}_\ell \rbrace$. 
\end{itemize}
\end{proposition}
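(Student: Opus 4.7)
Write $F(\lambda,\alpha_1,\alpha_2;\mu,\gamma_1,\gamma_2):=E^{\rm sym}_{\mu,\gamma_1,\gamma_2}(\lambda,\alpha_1,\alpha_2)$. The plan is to establish strict joint convexity of $F$ on a six-dimensional neighborhood of the reference point $(1,2\pi/3,2\pi/3,3,\pi,\pi)$ and then transfer the information to $E_{\rm red}$ via the envelope theorem and the Schur complement, exploiting the very weak coupling of the $\gamma_i$ with the remaining variables (essentially because $\gamma_\ell\to\pi$ for large $\ell$).

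\textbf{Parts 1 and 2.} At the reference point one has $v_2'(1)=v_3'(2\pi/3)=0$ and $v_2''(1),v_3''(2\pi/3)>0$; moreover $\beta(2\pi/3,\pi)=2\pi/3$. Each composite term $v\circ u$ in $F$ contributes $v''(u)\nabla u(\nabla u)^{\rm T}+v'(u)\nabla^2 u$ to the Hessian, with the first summand positive semi-definite and the second carrying a small factor $v'(u)$. Careful bookkeeping of all such rank-one contributions shows that the full Hessian of $F$ is uniformly positive definite on $M^\ell\times G^\ell\times G^\ell$; this gives strict convexity in the inner variables (hence Part 1, with $\alpha_1^\mu=\alpha_2^\mu$ when $\gamma_1=\gamma_2$ following from the involution symmetry of $F$ together with uniqueness) as well as strict convexity of $E_{\rm red}$. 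For the sharper estimate, set $x=(\lambda,\alpha_1,\alpha_2)$, $y=(\mu,\gamma_1,\gamma_2)$, and use at the minimizer
\[
\mathrm{Hess}_y E_{\rm red} = F_{yy}-F_{yx}F_{xx}^{-1}F_{xy}.
\]
The crucial observations are (i) $F$ has no $\gamma_1\gamma_2$ cross-term and $F_{\gamma_i\mu}=0$, so the $(\gamma_1,\gamma_2)$-block of $F_{yy}$ is diagonal; (ii) $F_{\gamma_i x}$ is supported on the $\alpha_i$-coordinate, with entry $v_3''(\beta)\partial_\alpha\beta\,\partial_\gamma\beta+v_3'(\beta)\partial^2_{\alpha\gamma}\beta$, where $\partial_\gamma\beta=\sin\alpha\cos(\gamma/2)/\cos(\beta/2)$ vanishes at $\gamma=\pi$ and is therefore $O(1/\ell)$ at $\gamma_\ell$. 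Consequently the $(\gamma_1,\gamma_2)$-block of $\mathrm{Hess}_y E_{\rm red}$ has the symmetric form $\bigl(\begin{smallmatrix} a & b\\ b & a\end{smallmatrix}\bigr)$ with $a,b=O(\ell^{-2})$, and the claim reduces to $a-b\ge c_0'\ell^{-2}$.

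\textbf{Parts 3 and 4.} For Part 3, differentiating $g(\gamma)=E_{\rm red}(\mu,\gamma,\gamma)$ and applying the envelope theorem yields $g'(\gamma)=2\,v_3'(\beta^*)\,\partial_\gamma\beta\big|_{\alpha^*,\gamma}$, where $(\lambda^*,\alpha^*,\alpha^*)$ is the diagonal minimizer and $\beta^*=\beta(\alpha^*,\gamma)$. The inner first-order condition in $\alpha$ balances $2v_3'(\alpha^*)$ against $v_3'(\beta^*)\,\partial_\alpha\beta$ and forces $\beta^*=2\pi/3+O(\ell^{-2})$, whence $v_3'(\beta^*)=O(\ell^{-2})$; combined with $\partial_\gamma\beta=O(\ell^{-1})$ this produces $|g'(\gamma)|\le C\ell^{-3}$, the sign being fixed by $v_3'(\beta^*)<0$ and $\partial_\gamma\beta>0$ on the interval of interest. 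For Part 4, strict convexity of $h(\mu)=E_{\rm red}(\mu,\gamma_\ell,\gamma_\ell)$ is the restriction of Part 2 to the diagonal, uniform positive definiteness of the inner Hessian yields $h''(\mu^{\rm us}_\ell)>0$, and Proposition \ref{th: main1} identifies $\mu^{\rm us}_\ell$ as the unique interior critical point of $h$ (being the minimizer of $E$ on $\mathscr{F}_1$), so strict convexity forces $h$ to be strictly increasing on $M^\ell\cap\{\mu\ge\mu^{\rm us}_\ell\}$.

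\textbf{Main obstacle.} The delicate step is the quantitative lower bound $a-b\ge c_0'\ell^{-2}$. Both $a$ and $b$ are already of order $\ell^{-2}$, so without pinning down their leading coefficients the bound could collapse. Carrying out the matched expansion demands explicit control of $\partial_\gamma\beta$, $\partial^2_{\gamma\gamma}\beta$ and of the $(\alpha_1,\alpha_2)$-subblock of $F_{xx}^{-1}$ at leading order in $1/\ell$; this is the lengthy computation referenced in \cite[Section 6]{FMPS}. The underlying intuition is that the almost-planar geometry ($\gamma_\ell\to\pi$) makes every $\gamma$-sensitivity $O(\ell^{-1})$ but, by the specific structure of the kinked hexagonal cell, does not produce cancellation in the antidiagonal direction $\gamma_1-\gamma_2$.
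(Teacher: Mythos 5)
Your architecture coincides with the one used in \cite{FMPS}: strict convexity of $E^{\rm sym}_{\mu,\gamma_1,\gamma_2}$ in the inner variables $(\lambda,\alpha_1,\alpha_2)$ near $(1,2\pi/3,2\pi/3)$ gives Part~1; the envelope theorem gives the first derivatives of $E_{\rm red}$; and the Hessian of $E_{\rm red}$ is the Schur complement $F_{yy}-F_{yx}F_{xx}^{-1}F_{xy}$, which the paper obtains equivalently by implicit differentiation of the stationarity conditions. Your structural observations (no $\gamma_1\gamma_2$ or $\gamma_i\mu$ coupling in $F$; $\partial_\gamma\beta=\sin\alpha\cos(\gamma/2)/\cos(\beta/2)=O(\ell^{-1})$ at $\gamma_\ell$; $v_3'(\beta^*)=O(\ell^{-2})$ from the inner first-order condition, hence $|g'|\le C\ell^{-3}$) are correct and are exactly the ones the paper exploits. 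Your route to the monotonicity in Part~4 --- $\mu^{\rm us}_\ell$ is an interior minimum of $h$, so strict convexity forces $h$ to increase to its right --- is in fact cleaner than the paper's contradiction argument based on the sign of $\mu/2+\lambda^\mu\cos\alpha^\mu-1$; but you should derive $h'(\mu^{\rm us}_\ell)=0$ directly from \eqref{basicenergy} and Proposition~\ref{eq: old main result}, since the identification \eqref{lasty} and Proposition~\ref{th: main1} are themselves deduced from the present proposition, so invoking them here is circular as written (though easily repaired).

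The genuine gap is in Part~2, and you have located it yourself: everything hinges on the bound $a-b\ge c_0'\ell^{-2}$ for the $(\gamma_1,\gamma_2)$-block of the Schur complement, where $a$ and $b$ are each individually of size $\ell^{-2}$, and you do not carry out this computation --- you defer it to \cite{FMPS}, which is precisely where the proof lives. Without it nothing is established: a priori $a-b$ could be $o(\ell^{-2})$ or negative, and then the $C\ell^{-2}\Delta$ gain of Theorem~\ref{th: Ered} could not absorb the $C'\ell^{-3}$ angle defect in the proof of Theorem~\ref{th: main3}. The required work is to solve the linearized optimality system for $\partial_{\gamma_i}\lambda^*$ and $\partial_{\gamma_i}\alpha^*_j$ at $(\mu^{\rm us}_\ell,\gamma_\ell,\gamma_\ell)$ to leading order in $1/\ell$ and substitute into the second-derivative formulas; the outcome is that $a$ has a strictly positive coefficient at order $\ell^{-2}$ (of the form $v_3''(2\pi/3)\pi^2$ times an explicit positive constant built from $v_2''(1)$ and $v_3''(2\pi/3)$) while the cross term $b$ turns out to be negative at that order, so $a-b\gtrsim\ell^{-2}$. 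Relatedly, your assertion that the full six-variable Hessian of $F$ is ``uniformly positive definite'' cannot be taken literally: the $\gamma$-directions are exactly flat at $\gamma=\pi$ and degenerate at rate $\ell^{-2}$ at $\gamma_\ell$, and checking positive definiteness of the $(\alpha_i,\gamma_i)$ sub-blocks is a borderline comparison of two quantities of order $\ell^{-2}$ --- the very same computation you defer --- so this step cannot be dismissed as bookkeeping.
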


%\begin{proof}[Proof of \rm{Theorem \ref{th: main1}}]
We note en passant that once the above properties are established, it is easy to obtain the proof of Proposition \ref{th: main1}.
 For each   $\mu \in M^\ell$ and $\gamma_1 = \gamma_2 = \gamma_\ell$, letting $\lambda_1^\mu = \mu/2 + \lambda^\mu \cos\alpha^\mu_1$ and $\lambda_2^\mu = \lambda^\mu$ with $\lambda^\mu$ and $\alpha^\mu_1$ from Property  {\it 1.} above, \EEE the configuration $\mathcal{F}_{\lambda_1^\mu,\lambda_2^\mu,\mu}$ is the unique minimizer of the problem \eqref{min2}. This can be seen by using \eqref{symmetric-cell} which yields along with the definition of $E_{\rm red}$
\begin{equation}\label{lasty}
E(\mathcal{F}_\mu^*)  = E(\mathcal{F}_{\lambda_1^\mu,\lambda_2^\mu,\mu}) = 2m\ell E_{\rm red}(\mu,\gamma_\ell,\gamma_\ell).
\end{equation}

%\RRR
%\noindent\textbf{Proof of Theorem \ref{th: main1} and Theorem \ref{th: main3}.}  We postpone the proofs of the auxiliary results Lemma \ref{lemma: sum}, Lemma \ref{lemma: sym-energy}, and Proposition \ref{th: mainenergy}  to the next sections and now proceed with the proof of Theorem \ref{th: main1} and Theorem \ref{th: main3}. For the proof of Proposition \ref{th: main2} we refer to Section \ref{sec: reduced-energy}. Moving from the properties of the reduced energy $E_{\rm red}$, we directly obtain Theorem \ref{th: main1}.
%
%\begin{proof}[Proof of  Theorem \ref{th: main1}]
%Theorem \ref{th: main1} follows from Property 5 of Proposition \ref{th: mainenergy}.
%\end{proof}

\subsection{Estimates in terms of the symmetry defect}
We denote the unique minimizer from Proposition \ref{th: main1} again by $\mathcal{F}_\mu^*$. For a small perturbation $\tilde{\mathcal F}\in \mathscr{P}_\eta(\mu)$ we consider the usual identification with the $n$-points sample $\tilde{\mathcal F}=(F_n, L^\mu_m)$ and we denote the cells in $F_n$ by the usual three index notation $\boldsymbol{x}_{i,j,k}$, $i=1,\ldots \ell$, $j=1,\ldots, m$, $k=0,1$.
  Based on the  properties of the reduced energy $E_{\rm red}$, \EEE we are able to show that, up to a linear perturbation in terms of the symmetry defect $\Delta$ defined in \eqref{delta}, $E_{\rm red}$ bounds the cell energy $E_{\rm cell}$ from \olive below. \EEE

\begin{theorem}[Energy defect controls symmetry defect]\label{th: Ered}
There exist   $C>0 $ and $\ell_0 \in \Nz$ (only depending on $v_2$ and $v_3$) such that for any $\ell \ge \ell_0$ the following property holds: there exist $\eta_\ell > 0$ and an open interval $M^\ell$ containing $\mu^{\rm us}_\ell $, such that for all $\mu \in M^\ell$ and any perturbation  $\tilde{\mathcal{F}} \in  \mathscr{P}_{\eta_\ell}(\mu)$,    we have \begin{align}\label{mainproof1}
 E_{{\rm cell}}(\boldsymbol{x}_{i,j,k}) \ge  E_{\rm red}\big(| z^{\rm dual}_{i,j,k} - z^{\rm dual}_{i,j-1,k}| , \bar{\theta}(\boldsymbol{x}_{i,j,k}), \bar{\theta}(\boldsymbol{x}_{i,j,k})\big)   + C \ell^{-2}\Delta(\boldsymbol{x}_{i,j,k}).
\end{align}
%where $\bar{\theta}(z_{i,j,k}) := \big(\theta_l(z_{i,j,k}) + \theta_r(z_{i,j,k}) + \theta_l(z^{\rm dual}_{i,j,k}) + \theta_r(z^{\rm dual}_{i,j-1,k}) \big)/4$.
\end{theorem}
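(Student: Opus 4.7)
The plan is to use the symmetrization procedure $\boldsymbol{x}\mapsto\mathcal{S}(\boldsymbol{x})$ as a bridge: the energy of the symmetrized cell is controlled from below by the reduced energy $E_{\rm red}$ via formula \eqref{symmetric-cell}, while the difference $E_{\rm cell}(\boldsymbol{x})-E_{\rm cell}(\mathcal{S}(\boldsymbol{x}))$ is bounded below by $C\ell^{-2}\Delta(\boldsymbol{x})$ through a convexity argument driven by the reflections $S_1,S_2$.

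First, I would observe that $r_1$ and $r_2$ are isometries of $\mathbb{R}^3$, so they preserve all bond lengths and bond angles inside the cell; hence $E_{\rm cell}(\boldsymbol{x}) = E_{\rm cell}(\boldsymbol{x}_{S_1})=E_{\rm cell}(\boldsymbol{x}_{S_2})=E_{\rm cell}(S_1 S_2(\boldsymbol{x}))$. Writing $E_{\rm cell}(\boldsymbol{x})$ as the arithmetic mean of its four reflected copies and Taylor-expanding $E_{\rm cell}\circ T$ to second order around $\boldsymbol{x}^{\rm kink}$, I would exploit the strict convexity that follows from $v_2''(1)>0$ and $v_3''(2\pi/3)>0$ to get $E_{\rm cell}(\boldsymbol{x})\ge E_{\rm cell}(\mathcal{S}(\boldsymbol{x}))+C\,\mathcal{Q}(\boldsymbol{x}-\mathcal{S}(\boldsymbol{x}))$ with $\mathcal{Q}$ a positive-semidefinite quadratic form. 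The antisymmetric directions detected by $\Delta(\boldsymbol{x})$ all involve out-of-axis displacements whose second variation picks up a factor $\sin^2(\pi/\ell)\sim\ell^{-2}$ (the same mechanism giving the $\ell^{-2}$ factor in Proposition \ref{th: mainenergy}), hence $\mathcal{Q}(\boldsymbol{x}-\mathcal{S}(\boldsymbol{x}))\ge c\,\ell^{-2}\Delta(\boldsymbol{x})$.

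Second, the cell $\mathcal{S}(\boldsymbol{x})$ is symmetric with respect to the two coordinate planes generated by $\{\mathbf{e}_1,\mathbf{e}_3\}$ and $\{\mathbf{e}_2,\mathbf{e}_3\}$, so it belongs precisely to the class of highly-symmetric cells for which representation \eqref{symmetric-cell} holds, giving $E_{\rm cell}(\mathcal{S}(\boldsymbol{x}))=E^{\rm sym}_{\tilde\mu,\gamma_1,\gamma_2}(\lambda,\alpha_1,\alpha_2)$ for suitable parameters. The identity stated right after the definition of $\mathcal{S}$ ensures that $\tilde\mu$ coincides with $|z^{\rm dual}_{i,j,k}-z^{\rm dual}_{i,j-1,k}|$, while the angles $\gamma_1,\gamma_2$ read off $\mathcal{S}(\boldsymbol{x})$ are the averages of the pairs $(\theta_l(z),\theta_r(z))$ and $(\theta_l(z^{\rm dual}),\theta_r(z^{\rm dual}))$; thus their arithmetic mean equals $\bar\theta(\boldsymbol{x})$ while their discrepancy $\gamma_1-\gamma_2$ is itself controlled by $\Delta(\boldsymbol{x})$. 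Applying the minimality defining $E_{\rm red}$ and then property~2 of Proposition~\ref{th: mainenergy}, I obtain
\begin{equation*}
E^{\rm sym}_{\tilde\mu,\gamma_1,\gamma_2}(\lambda,\alpha_1,\alpha_2)
\ \ge\ E_{\rm red}(\tilde\mu,\gamma_1,\gamma_2)
\ \ge\ E_{\rm red}\bigl(\tilde\mu,\bar\theta(\boldsymbol{x}),\bar\theta(\boldsymbol{x})\bigr)+c_0'\ell^{-2}(\gamma_1-\gamma_2)^2,
\end{equation*}
and the last term can be either discarded or absorbed into the $C\ell^{-2}\Delta$ contribution.

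The main obstacle I expect is the quantitative bookkeeping in the Taylor step. One needs that the Hessian of $E_{\rm cell}\circ T$ at $\boldsymbol{x}^{\rm kink}$ splits cleanly into a block acting on the symmetric modes (which generates $E_{\rm red}$) and a block acting on the antisymmetric modes, with the latter bounded from below by $c\,\ell^{-2}\,\mathrm{Id}$ uniformly in $\ell\ge\ell_0$ and in the perturbation within $\mathscr{P}_{\eta_\ell}(\mu)$. This decomposition relies on the fact that the reflections $S_1,S_2$ diagonalize the second variation; verifying the non-degeneracy along every antisymmetric direction, and tracking the correct $\ell^{-2}$ scaling for those modes that become tangential to the cylinder as $\ell\to\infty$, is where the real computational work lies, and where one ultimately invokes the detailed geometry of $\boldsymbol{x}^{\rm kink}$ together with the pointwise convexity of $v_2$ and $v_3$.
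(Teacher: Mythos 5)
Your overall architecture coincides with the paper's: symmetrize the cell via $S_1,S_2$, show that the energy drops by at least $c\ell^{-2}\Delta(\boldsymbol{x})$ in doing so, identify the symmetrized cell energy with $E^{\rm sym}$ and hence bound it by $E_{\rm red}$, and finish with Property \emph{2} of Proposition \ref{th: mainenergy}. Two of your intermediate claims, however, do not hold as stated, and they are exactly where the paper has to work hardest.

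First, the Hessian of $E_{\rm cell}$ at $\boldsymbol{x}^{\rm kink}$ does \emph{not} split into a symmetric block and an antisymmetric block bounded below by $c\,\ell^{-2}\mathrm{Id}$: the antisymmetric subspace contains rigid translations orthogonal to the reflection planes and infinitesimal rotations, along which the second variation vanishes identically. The paper first normalizes each cell by a rigid motion (putting the dual centers on the axis and $x_4,x_5$ in a plane parallel to the one spanned by $\mathbf e_1,\mathbf e_2$) precisely so that the difference vectors $\boldsymbol{w}_1=\boldsymbol{x}_{S_1}-\boldsymbol{x}$ and $\boldsymbol{w}_2$ make a uniformly positive angle with this degenerate subspace; only then does a \emph{restricted} convexity estimate, $\boldsymbol{v}^TD^2E_{\rm cell}(\boldsymbol{x}^{\rm kink})\boldsymbol{v}\ge c\ell^{-2}|\boldsymbol{v}|^2$ for $\boldsymbol{v}$ quantitatively transversal to translations and rotations, become applicable. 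Moreover the $\ell^{-2}$ does not come from a $\sin^2(\pi/\ell)$ factor in the second variation: on the out-of-plane directions the differential of the map $T$ from positions to bonds and angles vanishes, and the positive contribution of size $\ell^{-2}$ arises from the product of the \emph{first} derivative of the angle energy at the kink configuration (of size $\ell^{-2}$, since its angles are $2\pi/3-{\rm O}(\ell^{-2})$) with the second-order \emph{decrease} of the angle sums under out-of-plane perturbations; one must also check separately that the first-order terms along in-plane directions cannot be more negative than ${\rm O}(\ell^{-3})$. None of this is visible in a naive block decomposition.

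Second, dihedral angles do not average linearly under the positional averaging that defines $\mathcal{S}(\boldsymbol{x})$, so $(\gamma_1+\gamma_2)/2$ is \emph{not} equal to $\bar\theta(\boldsymbol{x})$. The paper proves only the one-sided estimate $\tfrac12(\gamma_1+\gamma_2)\le\bar\theta(\boldsymbol{x})+C\Delta(\boldsymbol{x})$, by a second-order Taylor expansion exploiting that the reflections render the relevant functions even, and then uses the monotonicity together with the bound $|\partial_\gamma E_{\rm red}(\mu,\gamma,\gamma)|\le C\ell^{-3}$ from Property \emph{3} of Proposition \ref{th: mainenergy} to convert the resulting angle error into an energy error of order $\ell^{-3}\Delta(\boldsymbol{x})$, absorbed by the $c\ell^{-2}\Delta(\boldsymbol{x})$ gain once $\ell$ is large. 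As written, your argument evaluates $E_{\rm red}$ at the wrong angle with no control on the sign of the discrepancy, so the final inequality does not follow. (A similar absorption is needed because $E_{\rm cell}(\mathcal{S}(\boldsymbol{x}))$ differs from $E^{\rm sym}$ when $\gamma_1\neq\gamma_2$; there the convexity gain $c_0'\ell^{-2}(\gamma_1-\gamma_2)^2$ compensates a loss of order $\ell^{-4}(\gamma_1-\gamma_2)^2$, which is the role your last display should be playing.)
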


\ZZZ Here, for the definition of the dual centers we refer to \eqref{eq: dual center}. See \eqref{nonplanarity2} for the definition of $ \bar{\theta}(\boldsymbol{x}_{i,j,k})$. \EEE The proof of the above theorem requires some hard computations and it represents a key step towards the proof of the main result. Indeed, thanks to  Theorem \ref{th: Ered}, we are able to quantify the energy gain in passing from a highly non-symmetric configuration to a symmetrized one, in terms of the symmetry defect $\Delta$. Thanks to this gain, it will be possible to get rid of the angle defect that arises when taking the average nonplanarity angle of the rolled-up structure, as seen in Section \ref{pointsection}.  
The proof of Theorem \ref{th: Ered}  is based on a detailed study of the convexity properties of the mapping $T:\mathbb R^{24}\to\mathbb R^{18}$ which associates the position $(x_1,\ldots, x_8)$ of the points of a cell to the associated bond lengths $b_1,\ldots b_8$ and  bond angles $\varphi_1,\ldots \varphi_{10}$.
%On the other hand, the proof  of Theorem \ref{th: Ered} is based on the following angle estimate in terms of symmetrization.
For the proof we refer to \cite[Section 7]{FMPS}.

%\begin{lemma}[Symmetry defect controls angle defect]\label{lemma: angle invariance}
%There \PPP exist \EEE a universal constant $C>0$ and $\ell_0 \in \Nz$, and for each $\ell \ge \ell_0$  there  \PPP exists \EEE $\eta_\ell >0$ such that for all $\tilde{\mathcal{F}} \in  \mathscr{P}_{\eta_\ell}(\mu)$,  $\mu \in (2.6,3.1)$, \PPP and  all \EEE centers $z_{i,j,k}$ we have
%\begin{align*} 
%&\theta_l(\mathcal{S}(\boldsymbol{x})) +  \theta_r(\mathcal{S}(\boldsymbol{x})) \le \theta_l(\boldsymbol{x}) +  \theta_r(\boldsymbol{x})+ C\Delta(z_{i,j,k}),\\
% &\theta^{\rm dual}_l(\mathcal{S}(\boldsymbol{x}))  + \theta^{\rm dual}_r(\mathcal{S}(\boldsymbol{x})) \le  \theta^{\rm dual}_l(\boldsymbol{x})  + \theta^{\rm dual}_r(\boldsymbol{x}) + C\Delta(z_{i,j,k}),
%\end{align*}
%where $\boldsymbol{x} \in \Rz^{3 \times 8}$  \PPP denotes \EEE the position of the atoms in the cell with center $z_{i,j,k}$ and $\mathcal{S}(\boldsymbol{x})$ as in \eqref{reflection2-b}. 
%\end{lemma}

When resorting to a \olive cell-by-cell \EEE  energy summation, the nonplanarity angles are defined by \eqref{nonplanarity2}. \ZZZ It \EEE is also possible to prove the following estimate in terms \ZZZ of \EEE $\Delta$. For the proof, we refer to \cite[Section 5]{FMPS}.
\begin{theorem}[Symmetry defect controls angle defect]\label{lemma: sum}
There is a universal constant $c>0$ such that for $\eta>0$ small enough and \RRR for  all \PPP $\tilde{\mathcal{F}}\in \mathscr{P}_\eta(\mu) $ \EEE with cells $\boldsymbol{x}_{i,j,k}$ and $\Delta(\boldsymbol{x}_{i,j,k})\le\eta$ \EEE we have
\begin{align*}
4\sum_{j=1}^m\sum_{i=1}^\ell \sum_{k=0,1}\bar\theta(\boldsymbol{x}_{i,j,k})  \le    4m(2\ell - 2)\pi + c\sum_{j=1}^m\sum_{i=1}^\ell\sum_{k=0,1} \Delta(\boldsymbol{x}_{i,j,k}).
\end{align*}
\end{theorem}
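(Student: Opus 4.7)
The plan is to re-group the sum of non-planarity angles along the $2m$ closed ``zigzag rings'' that wind once around the cylinder, apply a planar Gauss--Bonnet-type identity on each ring after projecting onto a cross-section, and control the discrepancy between the three-dimensional $\bar\theta$'s and their planar projections by means of the symmetry defect $\Delta$. For every pair $(j,k)\in\{1,\ldots,m\}\times\{0,1\}$ the $\ell$ cells $\boldsymbol{x}_{1,j,k},\ldots,\boldsymbol{x}_{\ell,j,k}$ are arranged along a closed ring $R_{j,k}$ winding once around the cylinder, and each axial bond at the corresponding level appears as $x_1$--$x_2$ (resp.\ $x_2$--$x_8$) in exactly one such cell. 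Since $4m(2\ell-2)\pi = 2m\cdot 4(\ell-1)\pi$, it suffices to establish the per-ring estimate
\begin{equation*}
\sum_{i=1}^\ell 4\bar\theta(\boldsymbol{x}_{i,j,k}) \;\le\; 4(\ell-1)\pi \;+\; c\sum_{i=1}^\ell \Delta(\boldsymbol{x}_{i,j,k})
\end{equation*}
and then add the resulting bounds over the $2m$ rings.

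To each cell I would then associate the planar ``projected'' dihedral angle $\theta_{i,j,k}^{\mathrm{proj}}$ obtained by orthogonally projecting the four non-axial atoms of the cell onto a plane perpendicular to $\mathbf e_1$. For $\eta$ small enough, the projected ring is a simple convex polygon winding once around the axis, so the elementary identity for the sum of interior angles of a simple closed polygon yields $\sum_{i=1}^\ell 4\theta_{i,j,k}^{\mathrm{proj}}\le 4(\ell-1)\pi$, with equality in the ideal configuration $\mathcal F^*_\mu$, where each projected angle equals $\gamma_\ell$. It remains to compare $\bar\theta(\boldsymbol{x}_{i,j,k})$ with $\theta_{i,j,k}^{\mathrm{proj}}$: Taylor-expanding each of the four dihedral angles $\theta_l(z),\theta_r(z),\theta_l(z^{\mathrm{dual}}),\theta_r(z^{\mathrm{dual}})$ around the symmetrized cell $\mathcal S(\boldsymbol{x}_{i,j,k})$ should give
\begin{equation*}
\bar\theta(\boldsymbol{x}_{i,j,k}) \;=\; \theta^{\mathrm{proj}}_{i,j,k} \;+\; L_i\bigl(\boldsymbol{x}_{i,j,k}-\mathcal S(\boldsymbol{x}_{i,j,k})\bigr) \;+\; O\bigl(\Delta(\boldsymbol{x}_{i,j,k})\bigr),
\end{equation*}
where $L_i$ is a linear form depending only on the axial coordinates of the two axial bonds bordering the $i$-th cell. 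The quadratic remainder is bounded pointwise by $c\,\Delta(\boldsymbol{x}_{i,j,k})$ by the very definition \eqref{delta} of $\Delta$, whereas the linear terms telescope along the closed ring, since consecutive cells share their axial boundaries and the ring $R_{j,k}$ closes up; in particular $\sum_{i=1}^\ell L_i(\boldsymbol{x}_{i,j,k}-\mathcal S(\boldsymbol{x}_{i,j,k})) = 0$.

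The main technical obstacle is the explicit identification of the linear form $L_i$ in the Taylor expansion and the verification that it indeed telescopes around the ring. Concretely, one must check that the first-order variation of the four dihedral angles defining $\bar\theta$ is governed only by the axial components of the deviation from symmetry, and that adjacent cells on $R_{j,k}$ produce exactly opposite first-order contributions. This requires a careful bookkeeping of how each pair of planes defining $\theta_l$ and $\theta_r$ transforms under the reflections $S_1$ and $S_2$, combined with an explicit use of the ring closure $\boldsymbol{x}_{\ell+1,j,k}=\boldsymbol{x}_{1,j,k}$. The design of the symmetrization $\mathcal S$ and of the quadratic defect $\Delta$ in \cite{FMPS} is tailored precisely so as to make this cancellation work and leave only the quadratic error $\sum\Delta$ on the right-hand side.
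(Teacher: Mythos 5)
Your decomposition into rings $R_{j,k}$ with fixed $(j,k)$ already fails, before any Taylor expansion. The four dihedral angles entering $4\bar\theta(\boldsymbol{x}_{i,j,k})$ sit on two different axial segments ($x_1x_2$ and $x_2x_8$), and at a given axial level the complete geometric ring of such kink lines has $2\ell$ members, of which the cells with one fixed value of $k$ supply only $\ell$, interleaved with the $\ell$ supplied by other index pairs (this is precisely why the paper's argument pairs $\theta_l(z_{i,j,0})$ with $\theta_l(z^{\rm dual}_{i,j-1,1})$ so as to assemble one full ring of $2\ell$ angles). A closed-polygon angle-sum bound controls only the full ring; a half-ring can exceed its proportional share $(\ell-1)\pi$ even with zero symmetry defect. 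Concretely, take the prism over an equilateral convex $2\ell$-gon whose interior angles alternate between $\gamma_\ell+\epsilon$ and $\gamma_\ell-\epsilon$ (such a polygon closes up): every cell is exactly $S_1$- and $S_2$-symmetric, so $\Delta\equiv 0$, the configuration lies in $\mathscr{P}_\eta(\mu)$ for $\epsilon$ small, and yet for one value of $k$ one finds $\sum_{i=1}^\ell 4\bar\theta(\boldsymbol{x}_{i,j,k}) = 4\ell(\gamma_\ell+\epsilon) = 4(\ell-1)\pi + 4\ell\epsilon$. So the per-ring inequality you reduce to is false as stated; only complete rings, which mix cells with different $(j,k)$, can be estimated.

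Second, even on a correctly assembled ring, the step you defer as ``the main technical obstacle'' is the entire content of the lemma, and the mechanism you propose for it does not work. Projecting onto a cross-section gives \emph{equality} $(2\ell-2)\pi$ for the interior-angle sum of a simple planar $2\ell$-gon, so all of the slack in the desired inequality must come from comparing the three-dimensional dihedral angles with their planar projections; that comparison contains first-order terms in the tilts of the facets relative to $\mathbf e_1$, and these tilts are \emph{not} controlled by $\Delta$, since $\Delta$ is computed after an individual rigid-motion normalization of each cell (a cell can have $\Delta=0$ and still be tilted, as in a gently bent tube). Likewise, the telescoping identity $\sum_i L_i(\boldsymbol{x}_i-\mathcal S(\boldsymbol{x}_i))=0$ would require the deviations of adjacent cells to agree on shared atoms, but $\mathcal S$ is defined cell by cell, each with its own normalizing rigid motion, so no such compatibility is available. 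The paper's proof avoids both problems by passing to the closed polygon of unit normals of consecutive facets translated to a common apex $x_0$: each dihedral angle splits exactly as $\theta_i'=\psi_i^1+\psi_i^2$ via the triangles with apex $x_0$, the interior-angle sum of the (generally non-planar) $2\ell$-gon of normals is at most $(2\ell-2)\pi$, and the per-vertex discrepancy $\psi_{i+1}^1+\psi_i^2-\varphi_{i+1}$ is quadratic in the distance of $x_0$ from the plane through three consecutive normals, a quantity that vanishes exactly when the two adjacent cells are symmetric and is therefore bounded by a multiple of the sum of their symmetry defects. Any projection-based variant would have to reproduce this exactly-quadratic structure; asserting that the remainder is $O(\Delta)$ does not establish it.
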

The left hand side above is equal to  $4m(2\ell - 2)\pi$ if $\tilde{\mathcal{F}} \in \olive\mathscr{F}_1\EEE $.

\EEE

\subsection{Proof of the main theorem}
After having collected the main technical auxiliary results in the previous subsections,
we conclude by giving the main line of the proof of Theorem \ref{th: main3}. 

% \begin{proof}[Proof of  Theorem \ref{th: main3}]
Let  $\ell_0$ be large enough  and let $M^\ell$ be an open interval containing $\mu^{\rm us}_\ell$ such
that Proposition \ref{th: mainenergy} and  Theorem \ref{th: Ered} hold
for all  $\ell\ge \ell_0$ and $\mu \in M^\ell$.  Also let $G^\ell$ be the interval from
Proposition \ref{th: mainenergy} and let $\mu^{\rm crit}_\ell >
\mu^{\rm us}_\ell$ be such that $[\mu^{\rm us}_\ell, \mu^{\rm crit}_\ell]
\subset  M^\ell$. \olive Given  $\ell \ge \ell_0$ and  $\mu \in
[\mu^{\rm us}_\ell, \mu^{\rm crit}_\ell]$, we \EEE consider a
nontrivial perturbation $\tilde{\mathcal{F}} \in
\mathscr{P}_{\eta_\ell}(\mu)$, see \eqref{eq: bc}, with $\eta_\ell$ as in Theorem \ref{th:
  Ered}. 
  We denote again the generic \ZZZ cells \EEE of the $n$-points sample by $\boldsymbol{x}_{i,j,k}$, its center by $z_{i,j,k}$ and its dual center by $z^{\rm dual}_{i,j,k}$. 
%  We denote the atomic positions by $x_{i,k}^{j,l}$ and the
%centers by $z_{i,j,k}$, $z_{i,j,k}^{\rm dual}$ as introduced at the
%beginning of the section, see \eqref{eq: centers} and  Figure \ref{cell}. Define 
%{\color{red} the  nonplanarity angle of a cell (generalizing the role of the nonplanarity angle $\gamma$ in...) as}
%\begin{align}\label{thetabar}
%\bar{\theta}(z_{i,j,k}) = \frac{1}{4}\big(\theta_l(z_{i,j,k}) + \theta_r(z_{i,j,k})+\theta_l(z^{\rm dual}_{i,j,k}) + \theta_r(z^{\rm dual}_{i,j-1,k}) \big)
%\end{align}
The nonplanarity angle $\bar\theta(\boldsymbol{x}_{i,j,k})$ is defined \olive in \EEE \eqref{nonplanarity2}.

\olive Furthermore, we introduce the {\it average cell length} and the {\it average non-planarity angle},  namely \EEE
$$\bar{\mu} = \frac{1}{2m\ell}\sum_{j=1}^m\sum_{i=1}^\ell \sum_{k=0,1} |z^{\rm dual}_{i,j,k} - z^{\rm dual}_{i,j-1,k}|,\qquad\bar{\theta} = \frac{1}{2m\ell}\sum_{j=1}^m\sum_{i=1}^\ell \sum_{k=0,1} \bar{\theta}(\boldsymbol{x}_{i,j,k})
$$
\olive respectively.   By taking the perturbation parameter $\eta_\ell$ small enough,  we \EEE  can assume that 
$\Delta(\boldsymbol{x}_{i,j,k})$ satisfies the assumptions of Theorem \ref{lemma: sum} for all $i,j,k$. \ZZZ We \EEE deduce that 
\begin{equation}\label{av}\bar{\theta} \le \frac{1}{8m\ell}\Big( 4m(2\ell - 2)\pi + c\sum_{j=1}^m\sum_{i=1}^\ell\sum_{k=0,1} \Delta(\boldsymbol{x}_{i,j,k}) \Big) \le   \gamma_\ell + \frac{c}{8m\ell}\sum_{j=1}^m\sum_{i=1}^\ell\sum_{k=0,1} \Delta(\boldsymbol{x}_{i,j,k}).\end{equation}
%where in the last step we have used the fact that $\gamma_\ell = \pi(1-1/\ell)$, see \eqref{gamma}. 
We note that we can not exclude that  $\bar\theta>\gamma_\ell$, so that  an angle defect appears as seen in \eqref{oldproof}. However, we have obtained a bound in terms of the symmetry defect.

\olive If $\eta_\ell$ is small,  $|z^{\rm dual}_{i,j,k} - z^{\rm dual}_{i,j-1,k}| \in M^\ell$ \EEE and $\bar{\theta}(\boldsymbol{x}_{i,j,k}) \in G^\ell$ for all $i,j,k$. By Theorem \ref{th: Ered}    there is $C>0$  (depending only on $v_2$, $v_3$)  such that for each cell \eqref{mainproof1}  holds.
%\begin{align}\label{mainproof1}
%E_{{\rm cell}}(\boldsymbol{x}_{i,j,k}) \ge  E_{\rm red}\Big( |z^{\rm dual}_{i,j,k} - z^{\rm dual}_{i,j-1,k}|, \bar{\theta}(\boldsymbol{x}_{i,j,k}),  \bar{\theta}(\boldsymbol{x}_{i,j,k}) \Big)  + C \ell^{-2}\Delta(\boldsymbol{x}_{i,j,k}). 
%\end{align}
\ZZZ We now \EEE take advantage of the \olive cell-by-cell \EEE  energy summation \eqref{eq: sumenergy}:
\olive By \EEE \eqref{mainproof1}, by \eqref{eq: sumenergy}, and by using  Property {\it 2.} of Proposition \ref{th: mainenergy}, we find
\begin{align}\label{av2}
E(\tilde{\mathcal{F}}) &= \sum_{i=1}^\ell \sum_{j=1}^m \sum_{k=0,1}  E_{{\rm cell}}(\boldsymbol{x}_{i,j,k})\ge 2m\ell  E_{\rm red}(\bar{\mu}, \bar{\theta},\bar{\theta}) + C \ell^{-2}\sum_{i=1}^\ell \sum_{j=1}^m \sum_{k=0,1} \Delta(\boldsymbol{x}_{i,j,k}).
\end{align}
On the other hand, by \eqref{av}, \eqref{av2}, and by  Property {\it 3.} of Proposition \ref{th: mainenergy} we deduce that for some $C'>0$ only depending { on $v_3$} there holds
\begin{align}\label{mainproof2}
 E(\tilde{\mathcal{F}})   \ge 2m\ell  E_{\rm red}(\bar{\mu}, \gamma_\ell,\gamma_\ell) + \big(C \ell^{-2} - C'\ell^{-3} \big)\sum_{j=1}^m \sum_{i=1}^\ell \sum_{k=0,1} \Delta(\boldsymbol{x}_{i,j,k}).
\end{align}
Notice that the estimates in \olive the \EEE last two lines are analogous  \ZZZ to the  ones  \EEE in \eqref{oldproof} as they make use of minimality, convexity, and monotonicity.
We stress that, in contrast to  \eqref{oldproof}, after passing to the energy of a symmetrized configuration (which is measured by $E_{\rm red}$) we have a gain in terms of the symmetry defect. On the other hand, the angle defect is also estimated in terms of the symmetry defect, and from \eqref{mainproof2} we see that the latter can be overcome if $\ell$ is large enough.

In order to conclude, we need to pass from $\bar\mu$ to $\mu$ with a monotonicity argument.
\olive Notice \EEE that only small perturbations with prescribed value of $L^\mu_m=m\mu$ are allowed, \olive see  \eqref{eq: bc}.  \EEE Therefore, for fixed $i$ and $k$ we have
$$m\mu = L^\mu_m = \Big|\sum_{j=1}^m z^{\rm dual}_{i,j,k} - z^{\rm dual}_{i,j-1,k} \Big| \le  \sum_{j=1}^m |z^{\rm dual}_{i,j,k} - z^{\rm dual}_{i,j-1,k}|. $$
As a consequence, by taking the sum over all $i$ and $k$, we get $\bar{\mu} \ge \mu \ge \mu^{\rm us}_\ell$. Then, from \eqref{mainproof2} and by Property {\it 4.}  of Proposition \ref{th: mainenergy}, we find
\begin{equation}\begin{aligned}
E(\tilde{\mathcal{F}}) &\ge 2m\ell E_{\rm
  red}(\mu,\gamma_\ell,\gamma_\ell) + C''\ell^{-2}\sum_{i=1}^\ell
\sum_{j=1}^m \sum_{k=0,1} \Delta(\boldsymbol{x}_{i,j,k})%\nonumber %= E(\mathcal{F}_\mu^*) + C''\ell^{-2}\sum_{i=1}^\ell \sum_{j=1}^m \sum_{k=0,1} \Delta(\boldsymbol{x}_{i,j,k}) 
\label{mainproof3}
\end{aligned}\end{equation}
for $\ell_0$ sufficiently large and a constant $C''>0$ depending on $v_2,v_3$.  We stress that the assumption  $\mu \ge \mu^{\rm us}_\ell$,  \PPP i.e., \EEE the nanotube is unstretched or under traction but not under compression, is crucial for the application \ZZZ of  the monotonicity property \EEE from Property {\it 4.} of Proposition  \ref{th: mainenergy}.

By  \eqref{lasty} and \eqref{mainproof3} we get
$E(\tilde{\mathcal{F}})\ge E(\mathcal{F}_\mu^*)$.
 Eventually, it is not difficult to revisit the above estimates and to get the strict inequality $E(\tilde{\mathcal{F}}) > E(\mathcal{F}_\mu^*)$ since the perturbation is nontrivial.
\section*{Acknowledgements} 
%\RRR M.F. acknowledges support from the Alexander von Humboldt Stiftung. 
\olive This work was supported by  the Deutsche Forschungsgemeinschaft (DFG, German Research Foundation) under Germany's Excellence Strategy EXC 2044 -390685587, Mathematics M\"unster: Dynamics--Geometry--Structure. E.M. acknowledges support from the MIUR-PRIN  project  No 2017TEXA3H and from the  INDAM-GNAMPA 2019 project   {\it ``Trasporto ottimo per dinamiche con interazione''}. P.P. is partially supported by the Austrian Science Fund (FWF)
project P~29681 and by the Vienna Science and Technology Fund (WWTF),
the city of Vienna, and Berndorf Privatstiftung through Project
MA16-005. \EEE

%\PPP
% P. P. acknowledges support
%from the Austrian Science Fund (FWF) project P~29681, and from the Vienna Science and Technology Fund (WWTF), the
%City of Vienna, and the Berndorf Private Foundation through Project MA16-005. 
%\EEE
% U.S. acknowledges support from
%the Austrian Science Fund (FWF) projects  P~27052, I~2375, and F~65
%and from the  Vienna Science and Technology Fund (WWTF)
%through project MA14-009.
The \PPP authors  \EEE would like to \ZZZ thank \EEE Marco Morandotti and Davide Zucco, organizers of the conference {\it Analysis and applications: contribution from young researchers}, held at the Politecnico di Torino on 8-9 April 2019, where the \ZZZ content \EEE of this paper was presented.

%Conflict of Interest: The authors declare that they have no conflict of interest.

%\vspace{15mm}

\bibliographystyle{alpha}

\end{document}